\newcounter{NN}
\newtheorem{remark}[NN]{Remark}
\newtheorem{theorem}[NN]{Theorem}
\newtheorem{corollary}[NN]{Corollary}
\newtheorem{lemma}[NN]{Lemma}
\def\Z{\mathbb{Z}}
\def\R{\mathbb{R}}
\def\Z{\mathbb{Z}}
\def\mk{{\rm mKdV}}
\def\sg{{\rm sG}}
\def\pk{{\rm pKdV}}
\def\t{\Theta}
\begin{document}
\title{Involutivity of integrals of sine-Gordon, modified KdV and potential KdV maps}
\author{ Dinh T. Tran, P.H. van der Kamp, G.R.W. Quispel\\
Department of Mathematics, La Trobe University, Victoria, 3086, Australia\\
E-mail:dinhtran82@yahoo.com}
\date{\today }
\maketitle
\begin{abstract}
Closed form expressions in terms of multi-sums of products have been given in \cite{Tranclosedform, KRQ} of integrals of sine-Gordon,
modified Korteweg-de Vries and potential Korteweg-de Vries maps obtained as so-called $(p,-1)$-traveling wave reductions
of the corresponding partial difference equations. We prove the involutivity of these integrals with respect to
recently found symplectic structures for those maps. The proof is based on explicit formulae for the Poisson
brackets between multi-sums of products.
\end{abstract}
\section{Introduction}
Integrable systems boast a long and venerable history. The  history dates back to the 17th century with the work of Newton
on the two body problem. The notion of integrability was first introduced by Liouville in the 19th century in the
context of  finite dimensional
continuous Hamiltonian systems. Since then, it has been expanded to classes of nonlinear (partial) differential equations,
see for example \cite{ DynaSys4, Grammaticos2004Integrability-o}. More recently, there has been a shift of interest into systems with
discrete time, e.g. integrable ordinary difference  equations (or maps) and integrable partial difference (or lattice) equations.
Some of the first examples of discrete integrable systems appeared in \cite{Hirota, QCPN1984LinearIntegralEqs}. And a classification
of integrable lattice equations defined on a elementary square of the lattice has recently been obtained \cite{ABS},
based on the notion of multi-dimensional consistency. For maps there is the notion of complete or Liouville-Arnold integrability \cite{Bruschi1991,MaedaCompleteIntegrable,Ves},  analogues  to the same notion for continuous systems.
Briefly speaking, a mapping is said to be completely integrable if it has  a sufficient number of functionally independent integrals
that are in involution, that is, they Poison commute.

In this paper we study the involutivity of integrals of a certain class of integrable maps related to the fully discrete sine-Gordon,
modified Korteweg-de Vries (mKdV) and potential Korteweg-de Vries (pKdV) equations. These maps arise as travelling wave reductions
from the corresponding lattice equations. Such maps typically come in an infinite family of increasing dimension, and for this reason
it is not feasible to calculate Poisson brackets one by one and show that they all vanish. One  way to circumvent this problem
is to use the so-called Yang-Baxter structure, and that is the approach taken in \cite{CapelNijhoff1991,NCPquantummap}. This
approach was used to prove the involutivity of integrals for the so-called $(p,-p)$-reduction of the lattice pKdV equation. We refer
to \cite{QCPN,Kamp2009InitialValue} for the background on $(p,q)$-travelling wave reductions.
In this paper we study $(p,-1)$-reductions and we take a different approach. Starting from recently found symplectic structures \cite{Iatrou2003Higher-Dimension,RQs}, and recently obtained closed-form expressions in terms of multi-sums of products for integrals
of our family of sine-Gordon, mKdV and pKdV maps \cite{KRQ,Tranclosedform}, we proceed to prove involutivity of the integrals directly,
using explicit formulae for the Poisson brackets between multi-sums of products. These formulae will be proven by induction on the
number of variables, that is, on the dimension of the maps.

Recall, cf. \cite{Bruschi1991,Iatrou2003Higher-Dimension,Ves}, that a $2n$-dimensional discrete map $L:x\mapsto x^{'}$ is said to
be completely integrable if:
\begin{itemize}
\item there is a $2n\times 2n$  anti-symmetric non-degenerate  matrix $\Omega$  satisfying
the Jacobi identity
\[
\sum_{l}\left(\Omega_{li}\frac{\partial}{\partial x_l}\Omega_{jk}+\Omega_{lj}\frac{\partial}{\partial x_l}\Omega_{ki}
+\Omega_{lk}\frac{\partial}{\partial x_l}\Omega_{ij}\right)=0,
\]
such that $dL(x)\Omega(x)dL^T(x)=\Omega(x^{'})$, where $dL$ is the Jacobian of the map, $dL_{ij}:=\frac{\partial x^\prime_i}{\partial x_j}$.
\item there exist $n$ functionally independent integrals $I_1,I_2,\ldots, I_n$ satisfying
$\{I_r,I_s\}_x=0$ for all $1\leq r,s\leq n$, where the Poisson bracket is defined by
\begin{equation} \label{pb}
\{f,g\}_x = \nabla_x (f). \Omega. (\nabla_x (g))^T,
\end{equation}
with $\nabla_x=\frac{\partial}{\partial x_1}, \frac{\partial}{\partial x_2}, \ldots, \frac{\partial}{\partial x_{2n}}$.
Note that we will encounter several (related) Poisson brackets which are distinguished by the label $x$
 denoting the coordinates in which the bracket is expressed. Also, $\nabla_x$ will always have
the right number of components.
\end{itemize}

The families of ordinary difference sine-Gordon, mKdV and pKdV  equations
are given as follows, \cite{KRQ,Tranclosedform}
\begin{align}
\mbox{sine-Gordon}:& \ \alpha_1 (v_nv_{n+p+1}-v_{n+1}v_{n+p}) + \alpha_2
v_nv_{n+1}v_{n+p}v_{n+p+1} - \alpha_3 =0,\label{E:sG}\\
{\rm modified\ KdV}:&\  \beta_1(v_nv_{n+p}-v_{n+1}v_{n+p+1}) + \beta_2v_nv_{n+1}
-\beta_3v_{n+p}v_{n+p+1}=0,\label{E:mk}\\
{\rm potential\ KdV}:&\  (v_n-v_{n+p+1})(v_{n+1}-v_{n+p})-\gamma=0 \label{E:pk}.
\end{align}
These equations are obtained from the $(p,-1)$-traveling wave reductions of the corresponding partial difference equations of the form
\begin{equation}
\label{E:EqQuadgraph}
f(u_{l,m},u_{l+1,m},u_{l,m+1},u_{l+1,m+1})=0,
\end{equation}
where we have taken $v_n=u_{l,m}$ with $n=l+mp$, introducing the periodicity $u_{l,m}=u_{l+p,m-1}$, cf. \cite{QCPN,Kamp2009InitialValue}.

The  corresponding  $d=p+1$ dimensional maps derived from equations~(\ref{E:sG}), (\ref{E:mk}), (\ref{E:pk}) are $\R^{d}\rightarrow \R^{d}$,
\begin{equation}
\label{E:map}
(v_1,v_1,\ldots, v_d)\mapsto(v_2,v_3,\ldots, v_{d+1}),
\end{equation}
where $$v_{d+1}=v_1^{-1}\displaystyle{\frac{\alpha_1v_2v_{d}+\alpha_3}{\alpha_2v_2v_{d}+\alpha_1}},
\quad v_{d+1}=v_1\displaystyle{\frac{\beta_1v_d+\beta_2v_2}{\beta_1v_2+\beta_3v_d}},
\quad v_{d+1}=v_1-\frac{\gamma}{v_2-v_d},$$
respectively.
The integrals of sine-Gordon and mKdV maps can be expressed in terms of multi-sums of products,
which we call Theta:
\begin{equation} \label{D:Theta}
\Theta_{r,\epsilon}^{a,b}[f_i]: = \sum_{a \leq i_1 < i_2 < \cdots < i_r
\leq b} \prod_{j=1}^r (f_{i_j})^{(-1)^{j + \epsilon}},
\end{equation}
with $f_i=v_iv_{i+1}$. In \cite{KRQ} it was shown that $\lfloor d/2 \rfloor$ integrals of the sine-Gordon
map are given by
\begin{equation} \label{E:integral}
I^{\sg}_r = \alpha_1\left(\frac{v_d}{v_1}\t^{1,d-1}_{2r,1}+\frac{v_1}{v_d}\t^{1,d-1}_{2r,0}\right)
+\alpha_2\t^{1,d-1}_{2r+1,1}+\alpha_3\t^{1,d-1}_{2r+1,0},\quad 0\leq 2r < d-1
\end{equation}
and $\lfloor (d-1)/2 \rfloor$ integrals of the mKdV map are given by
\begin{equation} \label{E:mkdvinte}
I^{\mk}_r=\beta_1\left(v_1v_d\t^{1,d-1}_{2r-1,0}+\frac{1}{v_1v_d}\t^{1,d-1}_{2r-1,1}\right)
+\beta_2\t^{1,d-1}_{2r,1}+\beta_3\t^{1,d-1}_{2r,0}, \quad 0< 2r < d.
\end{equation}
In \cite{Tranclosedform} it was shown that $\lfloor (d-1)/2 \rfloor$ integrals of the pKdV map are given by
\begin{equation}
\label{E:pkinter}
I^{\pk}_r=\Psi^{2,d-2}_{r-1}+(v_d-v_2)\Psi^{2,d-3}_{r-1}+(v_{d-1}-v_1)\Psi^{3,d-2}_{r-1}
+\Psi^{3,d-3}_{r-2}+\left((v_{d-1}-v_1)(v_d-v_2)-\gamma\right)\Psi^{2,d-2}_{r},
\end{equation}
where $0 \leq r < \lfloor (d-1)/2\rfloor$ and
\begin{equation}
\label{D:Psi}
\Psi^{a,b}_r[f_i]= \left(\sum_{a\leq i_1,i_1+1<i_2,i_2+1,\ldots,<i_r\leq b}\prod_{j=1}^r f_{i_j}\right)\prod_{i=a}^{b+1}c_i,
\end{equation}
with $c_i=v_{i-1}-v_{i+1}$ and $f_i=1/(c_ic_{i+1})$.
In this paper we will prove that the integrals~(\ref{E:integral}),(\ref{E:mkdvinte}) and (\ref{E:pkinter}) are
in involution with respect to accompanying symplectic structures.

The paper is organized as follows. In section 2, we prove the involutivity of
integrals of  the sine-Gordon maps. Firstly,  we consider  the odd-dimensional maps. We introduce a transformation to reduce
 the dimension of the map by one and  we present a symplectic structure of the reduced map.
 Then we use properties of Theta with respect to the Poisson bracket associated to this symplectic structure. These
 properties are proven in Appendix A. To prove the involutivity
of the integrals, we write the Poisson bracket $\{I_r,I_s\}$ as a
 polynomial in $\alpha_1,\alpha_2,\alpha_3$ and prove that
all the coefficients of this polynomial vanish. Secondly, we consider the even-dimensional map. We provide a symplectic structure for it, and show that it relates to the symplectic structure for the odd dimensional map.
Therefore, many properties of Theta with respect to the new Poisson bracket can be obtained directly from the ones with respect
to the old Poisson bracket. The proof of involutivity is similar to the first case.

In section 3, we present relationships between symplectic structures  of the sine-Gordon and mKdV maps.
 We use these relationships  to derive  analogous properties of Theta
with respect to  the Poisson bracket of the mKdV maps. Involutivity of the integrals of the mKdV follows from these
properties.

In  section 4, we prove that the integrals of the pKdV map are in involution (with respect to
the approriate symplectic structures). We again  distinguish even and odd  dimensional maps and present a relationship
of symplectic structures between the two cases. For the even-dimensional map, the properties of multi-sums of products,
$\Psi$, with respect to the symplectic structure  are proved by induction in Appendix B. For the other case, the properties of $\Psi$
with respect to its symplectic structure are derived from the previous case. The involutiviy of integrals~(\ref{E:pkinter})
is proved by using these properties.


\section{Involutivity of sine-Gordon integrals\label{S:sGinvo}}
In this section, we distinguish two cases: the odd-dimensional and even-dimensional
 sine-Gordon maps. In \cite{KRQ} it is shown that for the even-dimensional map,  we have enough integrals
 for integrability. For the odd-dimensional map, we need to reduce the dimension of the map by one.
 We expand   the   Poisson bracket between two integrals $\{I_r,I_s\}$
as a quadratic  polynomial in the parameters $\alpha_1,\alpha_2,\alpha_3$ and prove
 the involutivity of
integrals (\ref{E:integral})  by showing its coefficients vanish.

\subsection{The case $d=2n+1$\label{SS:sGeven}}
 Using a reduction $f_i=v_iv_{i+1}$, we obtain a $2n$-dimensional map
\begin{equation}
\label{E:sGevenmap}
\sg: (f_1,f_2,\ldots, f_{2n})\mapsto (f_2,f_3,\ldots, f_{2n},\frac{f_2f_4\ldots f_{2n}(\alpha_1f_2f_4\ldots f_{2n}+\alpha_3f_3f_5\ldots f_{2n-1})}
{f_1f_3\ldots f_{2n-1}(\alpha_2f_2f_4\ldots f_{2n}+\alpha_1f_3f_5\ldots f_{2n-1})}.
\end{equation}
This map has $n$ integrals given by
\begin{equation}
\label{E:sGeveninte}
I^{\sg}_r=\alpha_1\left(\frac{f_2f_4\ldots f_{2n}}{f_1f_3\ldots f_{2n-1}}\t^{1,2n}_{2r,1}
+\frac{f_1f_3\ldots f_{2n-1}}{f_2f_4\ldots f_{2n}}\t^{1,2n}_{2r,0}\right)+\alpha_2\t^{1,2n}_{2r+1,1}+\alpha_3\t^{1,2n}_{2r+1,0},
\end{equation}
where the argument of $\t$ is $f_i$ and $0\leq r\leq n-1$.

A symplectic structure for the map~(\ref{E:sGevenmap}) is given by $\Omega^{\sg}_{2n}$, where
\begin{equation}
\label{E:sGevensym}
\Omega^{\sg}_{p}=\left(
\begin{matrix}
0& f_1f_2& f_1f_3&f_1f_4&\ldots &f_1f_{p-1}&f_1f_{p}\\
-f_2f_1& 0 & f_2f_3&f_2f_4&\ldots & f_2f_{p-2}&f_2f_{p}\\
-f_3f_1&-f_3f_2&0& f_3f_4&\ldots &f_3f_{p-1}&f_3f_{p}\\
\vdots& \vdots & \vdots & \vdots & \ldots & \vdots & \vdots\\
-f_{p}f_1 & -f_{p}f_2 & -f_{p}f_3 & -f_{p}f_4 & \ldots & -f_{p}f_{p-1}&0
\end{matrix}
\right).
\end{equation}
cf. \cite{Iatrou2003Higher-Dimension,RQs}.
One can verify that $d\sg \cdot \Omega^{\sg}_{2n} \cdot d\sg^T = \Omega^{\sg}_{2n} \circ \sg$.
Let   $g$ and $h$  be  functions differentiable with respect to the $f_i$'s. The symplectic structure
$\Omega^{\sg}_{p}$ defines the following Poisson bracket
\begin{eqnarray}
\{g,h\}_f&=&\nabla_f(g) \cdot \Omega^{\sg}_{p} \cdot (\nabla_f(h))^T \notag \\
&=&\sum_{i<j}f_if_j\left(\frac{\partial g}{\partial f_i}\frac{\partial h}{\partial f_j}
-\frac{\partial g}{\partial f_j}\frac{\partial h}{\partial f_i} \right).\label{E:sGevenPs}
\end{eqnarray}
We will prove that integrals (\ref{E:sGeveninte}) are in involution with respect to the symplectic
structure $\Omega^{\sg}_{2n}$, i.e $\{I^{\sg}_r, I^{\sg}_s\}_f=0$, for all $0\leq r,s\leq n-1$.
The proof is based on  the following explicit expressions for the Poisson bracket between Theta multi-sums, which
are proved in Appendix A.

\begin{lemma}
\label{L:PoiThetaeven}
Let $1\leq r,s\leq p$ and $\epsilon\in\{0,1\}$. We have
\begin{equation}\label{E:PoiTheta}
\{\Theta^{1,p}_{r,\epsilon},\Theta^{1,p}_{s,\epsilon}\}_f=\left\{\begin{array}{ll}
0&\  r,s \ \mbox{are both odd or both even},\\
\displaystyle{\sum_{i\geq 0}(-1)^i\Theta^{1,p}_{r+i,\epsilon}\Theta^{1,p}_{s-i,\epsilon}}
&\ r \  \mbox{even},\  s \ \mbox{odd and} \ r>s,\\
\displaystyle{\sum_{i\geq 1}(-1)^{i-1}\Theta^{1,p}_{r-i,\epsilon}\Theta^{1,p}_{s+i,\epsilon}}
&\ r\  \mbox{even},\  s \ \mbox{odd and} \ r<s.
\end{array}
\right. 
\end{equation}
Note that the right hand side of~(\ref{E:PoiTheta}) is a finite sum.
\end{lemma}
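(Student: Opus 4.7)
I would prove the lemma by induction on $p$, the number of variables. The base cases (small $p$, say $p\leq 2$) reduce to direct computation, in which most brackets vanish for degree reasons. For the inductive step, the workhorse is the \emph{incidence recursion}
\[
\Theta^{1,p}_{r,\epsilon} \;=\; \Theta^{1,p-1}_{r,\epsilon} \;+\; f_p^{(-1)^{r+\epsilon}}\,\Theta^{1,p-1}_{r-1,\epsilon},
\]
obtained by splitting the sum defining $\Theta^{1,p}_{r,\epsilon}$ according to whether $i_r=p$ or $i_r<p$. Setting $\widetilde\Theta_k := \Theta^{1,p-1}_{k,\epsilon}$ and $\epsilon_k := (-1)^{k+\epsilon}$, bilinearity of $\{\cdot,\cdot\}_f$ splits $\{\Theta^{1,p}_{r,\epsilon},\Theta^{1,p}_{s,\epsilon}\}_f$ into four pieces. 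The ``bulk'' piece $\{\widetilde\Theta_r,\widetilde\Theta_s\}_f$ is already the statement of the lemma at level $p-1$, and so is known by induction; the remaining three pieces each carry one or two factors of $f_p^{\pm 1}$.

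The second ingredient is the observation that $f_p$ only enters through the factor $f_p^{\epsilon_k}$, so by the Leibniz and chain rules every $f_p$-cross term reduces to the basic bracket $\{\widetilde\Theta_k,f_p\}_f = f_p\sum_{i<p} f_i\,\partial_i \widetilde\Theta_k$. A quick count of exponents inside each monomial of $\widetilde\Theta_k$ gives the quasi-homogeneity identity
\[
\sum_{i=1}^{p-1} f_i\,\partial_i \widetilde\Theta_k \;=\; \Big(\sum_{j=1}^{k}(-1)^{j+\epsilon}\Big)\,\widetilde\Theta_k,
\]
which vanishes when $k$ is even and equals $\pm\widetilde\Theta_k$ when $k$ is odd. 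Substituting this back collapses every $f_p$-cross term into a product of two $\widetilde\Theta$'s multiplied by an explicit power of $f_p$, with an overall sign determined by the parities of $r$ and $s$.

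The final step is reassembly. Using the recursion once more, the residual combination of $\widetilde\Theta_a\widetilde\Theta_b$'s and $f_p^{\epsilon_k}\widetilde\Theta_a\widetilde\Theta_b$'s is rewritten in terms of products $\Theta^{1,p}_{a,\epsilon}\Theta^{1,p}_{b,\epsilon}$ and compared to the claimed right-hand side of~(\ref{E:PoiTheta}). In the equal-parity cases, the inductive piece vanishes and the quasi-homogeneity factors kill all surviving cross terms. In the mixed-parity cases, the telescoping sum $\sum_i (-1)^i\Theta^{1,p}_{r+i,\epsilon}\Theta^{1,p}_{s-i,\epsilon}$ is produced from the corresponding sum at level $p-1$ plus precisely the $f_p$-contributions, by a telescoping shift of indices. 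Antisymmetry of $\{\cdot,\cdot\}_f$ lets one treat the $r>s$ and $r<s$ subcases together.

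\textbf{The main obstacle} I anticipate is the sign and index bookkeeping in this last reassembly step: four parity cases, each requiring a careful alignment of the telescoping indices produced at level $p-1$ with those needed at level $p$, together with the alternating signs coming from the $\epsilon_k$ exponents on $f_p$. The derivation is mechanical but error-prone, so I would organise it by fixing $\epsilon$ and proving the $(r \text{ even}, s \text{ odd}, r>s)$ case in full detail, then deducing the remaining mixed case by antisymmetry and the equal-parity cases as degenerate limits of the same manipulation.
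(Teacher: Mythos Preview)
Your proposal is correct and follows essentially the same approach as the paper: induction on $p$ via the recursion $\Theta^{1,p}_{r,\epsilon}=\Theta^{1,p-1}_{r,\epsilon}+f_p^{(-1)^{r+\epsilon}}\Theta^{1,p-1}_{r-1,\epsilon}$, with the cross terms involving $f_p$ evaluated through the Euler-type identity $\sum_i f_i\partial_i\Theta^{1,p-1}_{k,\epsilon}=(\text{parity factor})\cdot\Theta^{1,p-1}_{k,\epsilon}$, followed by a case-by-case reassembly. The only organisational difference is that the paper first invokes the reciprocal symmetry $f_i\mapsto 1/f_i$ (which swaps $\epsilon\leftrightarrow\epsilon+1$ and preserves the bracket) to reduce to $\epsilon=1$, so only one value of $\epsilon$ needs to be treated.
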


The next proposition provides the Poisson bracket between two Theta multi-sums with
different values of  $\epsilon$.

\begin{lemma}
\label{L:ThetaPoieven2}
Let $1\leq r,s\leq p$.
\begin{enumerate}
\item
If $r\equiv s \pmod 2$, we have
\begin{equation}
\label{E:ThetaPoieveniden1}
\{\Theta^{1,p}_{r,0},\Theta^{1,p}_{s,1}\}_f=\left\{
\begin{array}{lc}
\displaystyle{\sum_{i\geq 0}(-1)^i\t^{1,p}_{r-1-2\lfloor i/2\rfloor,i}\t^{1,p}_{s+1+2\lfloor i/2\rfloor,i+1}}
& \   r\leq s,\\
\displaystyle{\sum_{i\geq 0}}(-1)^i\t^{1,p}_{s-1-2\lfloor i/2\rfloor,i}\t^{1,p}_{r+1+2\lfloor i/2\rfloor,i+1}
& \  r>s.\\
\end{array}
\right. 
\end{equation}
\item
If $r\not\equiv s\pmod 2$, we have
\begin{equation}
\label{E:ThetaPoieveniden2}
\{\Theta^{1,p}_{r,0},\Theta^{1,p}_{s,1}\}_f=\left\{
\begin{array}{lc}
\displaystyle{\sum_{i\geq 0}} (-1)^i\Theta^{1,p}_{s+i,i+1}\Theta^{0,p}_{r-i,i}& \ r\ \mbox{odd},\ s\ \mbox{even},\\
\displaystyle{\sum_{i\geq 0}}(-1)^{i-1}\Theta^{1,p}_{s-i,i+1}\Theta^{1,p}_{r+i,i} & \ r \mbox{even},\  s\ \mbox{odd}.
\end{array}
\right. .
\end{equation}
\end{enumerate}
\end{lemma}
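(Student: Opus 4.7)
The plan is to prove Lemma~\ref{L:ThetaPoieven2} by induction on the number of variables $p$, paralleling the approach already used for Lemma~\ref{L:PoiThetaeven}. Small-$p$ base cases reduce to a finite direct computation. For the inductive step the key identity is the splitting
\[
\t^{1,p}_{r,\epsilon} = \t^{1,p-1}_{r,\epsilon} + f_p^{(-1)^{r+\epsilon}} \t^{1,p-1}_{r-1,\epsilon},
\]
obtained by isolating those sequences in the defining sum whose last selected index equals $p$.

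Substituting this decomposition into both factors of $\{\t^{1,p}_{r,0}, \t^{1,p}_{s,1}\}_f$ and expanding via bilinearity and the Leibniz rule yields three kinds of contributions: (i) ``interior'' brackets $\{\t^{1,p-1}_{r',0}, \t^{1,p-1}_{s',1}\}_f$ for $r'\in\{r,r-1\}$ and $s'\in\{s,s-1\}$, which are handled directly by the induction hypothesis; (ii) ``mixed'' brackets of the form $h\cdot\{g, f_p^k\}_f$ in which $g, h$ are $\t$'s at level $p-1$; and (iii) a ``diagonal'' contribution proportional to $\{f_p^a, f_p^b\}_f = 0$.

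The mixed terms can be evaluated in closed form: for any $g$ independent of $f_p$, one has $\{g, f_p^k\}_f = k f_p^k \sum_{i=1}^{p-1} f_i\,\partial g/\partial f_i$, and a direct term-by-term differentiation gives the Euler-type identity
\[
\sum_{i=1}^{p-1} f_i \frac{\partial \t^{1,p-1}_{r,\epsilon}}{\partial f_i} = \Bigl(\sum_{j=1}^{r}(-1)^{j+\epsilon}\Bigr)\,\t^{1,p-1}_{r,\epsilon},
\]
whose coefficient vanishes if $r$ is even and equals $-(-1)^\epsilon$ if $r$ is odd. Thus each surviving mixed contribution is a product of two $\t$'s at level $p-1$ times a monomial in $f_p$, with parities of $r$ and $s$ dictating which contributions are nonzero. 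Collecting all terms and grouping them according to the power of $f_p$ that they carry, one then runs the splitting identity in reverse, recognizing each pair $\t^{1,p-1}_{k,\delta} + f_p^{(-1)^{k+\delta}} \t^{1,p-1}_{k-1,\delta}$ as $\t^{1,p}_{k,\delta}$, to re-assemble the right-hand side of (\ref{E:ThetaPoieveniden1}) or (\ref{E:ThetaPoieveniden2}).

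The principal obstacle is bookkeeping. The right-hand sides carry $\lfloor i/2 \rfloor$ shifts and alternating $\epsilon$-labels, and they naturally split into four subcases (the parities of $r, s$ crossed with $r\lessgtr s$). I expect to dispatch each subcase separately: in the same-parity case the telescoping of shifts of size two is what produces the $\lfloor i/2 \rfloor$ pattern, while in the opposite-parity case the shifts advance one step per $i$. Throughout, Lemma~\ref{L:PoiThetaeven} is used to simplify any equal-$\epsilon$ brackets that appear as byproducts of the Leibniz expansion. The algebra is routine but heavy, and I expect verifying the correct $\lfloor i/2 \rfloor$-patterns (as well as clarifying the isolated $\Theta^{0,p}$ that appears in the opposite-parity formula) to absorb the bulk of the work.
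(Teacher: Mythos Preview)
Your proposal is correct and follows essentially the same route as the paper: induction on $p$ via the splitting $\t^{1,p}_{r,\epsilon} = \t^{1,p-1}_{r,\epsilon} + f_p^{(-1)^{r+\epsilon}}\t^{1,p-1}_{r-1,\epsilon}$, Leibniz expansion, evaluation of the mixed terms through the Euler-type identity (this is the paper's Lemma~\ref{L:PoiThetaeven1}), and then reassembly of the right-hand side by running the splitting backwards. The paper additionally uses the involution $f_i\mapsto 1/f_i$ (Remark~\ref{R:PoiTransform}) to halve the casework and isolates the reassembly step into explicit auxiliary summation identities such as~(\ref{E:sGsumprop1}); you should expect to need analogues of these, and your suspicion about the stray $\Theta^{0,p}$ is well founded.
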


Using Lemma~{\ref{L:PoiThetaeven}} and Lemma~\ref{L:ThetaPoieven2}, we have the following  corollary.

\begin{corollary}\label{C:sGeven}
 Let $r$ and $s$  be both  even or both odd and let $\epsilon\in\{0,1\}$. Then
 \begin{align}
 \{\t^{1,p}_{r,0},\t^{1,p}_{s,1}\}_f+\{\t^{1,p}_{r,1},\t^{1,p}_{s,0}\}_f&=0, \label{E:sGevenco1}\\
\{\Theta^{1,p}_{r-1,\epsilon},\Theta^{1,p}_{s,\epsilon}\}_f+\{\Theta^{1,p}_{r,\epsilon},\Theta^{1,p}_{s-1,\epsilon}\}_f&=
\left\{\begin{array}{ll}
0, & \  r,s\ \mbox {even},\\
\Theta^{1,p}_{r-1,\epsilon}\Theta^{1,p}_{s,\epsilon}-\Theta^{1,p}_{s-1,\epsilon}\Theta^{1,p}_{r,\epsilon},& \  r,s \ \mbox{ odd},
\end{array}
\right.
\label{E:sGevenco2}\\
\{\Theta^{1,p}_{r-1,\epsilon\pm 1},\Theta^{1,p}_{s,\epsilon}\}_f+\{\Theta^{1,p}_{r,\epsilon},\Theta^{1,p}_{s-1,\epsilon\pm 1}\}_f&=
\left\{\begin{array}{ll}
0, & \   r,s\ \mbox {even},\\
\Theta^{1,p}_{s-1,\epsilon\pm 1}\Theta^{1,p}_{r,\epsilon}-\Theta^{1,p}_{s,\epsilon}\Theta^{1,p}_{r-1,\epsilon\pm 1},& \  r,s \ \mbox{odd}.
\end{array}
\right. 
\label{E:sGevenco3}
 \end{align}
 \end{corollary}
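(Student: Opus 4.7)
The plan is to prove each of the three identities by expanding the two Poisson brackets on the left-hand side using Lemmas \ref{L:PoiThetaeven} and \ref{L:ThetaPoieven2}, and then exploiting antisymmetry of the bracket together with a reindexing of the resulting finite sums. Throughout, I will use the symmetry (or antisymmetry) of each identity under $r \leftrightarrow s$ to reduce to the case $r \leq s$.

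For (\ref{E:sGevenco1}), antisymmetry gives $\{\t^{1,p}_{r,1},\t^{1,p}_{s,0}\}_f = -\{\t^{1,p}_{s,0},\t^{1,p}_{r,1}\}_f$, so the identity is equivalent to $\{\t^{1,p}_{r,0},\t^{1,p}_{s,1}\}_f = \{\t^{1,p}_{s,0},\t^{1,p}_{r,1}\}_f$. Since $r$ and $s$ have the same parity, part 1 of Lemma \ref{L:ThetaPoieven2} applies to both brackets. Taking $r \leq s$, the left bracket falls under the case $r \leq s$ of that formula, while the right bracket, read with its arguments relabeled, falls under the $r > s$ case. Direct inspection shows that both produce the same sum $\sum_{i \geq 0}(-1)^i \t^{1,p}_{r-1-2\lfloor i/2 \rfloor, i}\t^{1,p}_{s+1+2\lfloor i/2\rfloor, i+1}$, so (\ref{E:sGevenco1}) follows at once.

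For (\ref{E:sGevenco2}) both Poisson brackets pair arguments of opposite parity, so the nontrivial cases of Lemma \ref{L:PoiThetaeven} apply. I would put each bracket into the normalised form $\{\t^{1,p}_{\text{even},\epsilon},\t^{1,p}_{\text{odd},\epsilon}\}_f$ via antisymmetry, and split into the subcases $r<s$, $r=s$, $r>s$. When $r,s$ are both even, the two resulting sums cancel term by term after shifting one summation index by one, giving zero. When $r,s$ are both odd, the same shift leaves precisely the boundary terms $\t^{1,p}_{r-1,\epsilon}\t^{1,p}_{s,\epsilon} - \t^{1,p}_{s-1,\epsilon}\t^{1,p}_{r,\epsilon}$, matching the right-hand side. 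For (\ref{E:sGevenco3}) the strategy is analogous, but now each bracket has arguments with different second subscript, so one invokes part 2 of Lemma \ref{L:ThetaPoieven2} (since $r-1,s$ and $r,s-1$ have opposite parity under the assumption that $r,s$ have the same parity). After again normalising to the $\{\t^{1,p}_{\cdot,0},\t^{1,p}_{\cdot,1}\}_f$ form and expanding, the infinite tails of the two sums telescope, leaving either zero (when $r,s$ are even) or the same two boundary terms (when $r,s$ are odd).

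The main obstacle is the combinatorial bookkeeping: correctly identifying which subcase of each lemma governs a given bracket after a possible sign flip from antisymmetry, tracking the effect of the $\lfloor i/2 \rfloor$ factors in part 1 of Lemma \ref{L:ThetaPoieven2}, and carrying out the index shifts so that the two sums align and cancel except for the predicted boundary contributions.
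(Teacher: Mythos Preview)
Your proposal is correct and follows the paper's intended approach: the paper gives no explicit proof, stating only that the corollary follows from Lemmas~\ref{L:PoiThetaeven} and~\ref{L:ThetaPoieven2}, and your plan of expanding each bracket via those lemmas, normalising with antisymmetry, and telescoping the resulting sums is exactly the computation those words summarise. Your observation that (\ref{E:sGevenco1}) is immediate from the built-in $r\leftrightarrow s$ symmetry of the two cases in (\ref{E:ThetaPoieveniden1}) is the cleanest way to see it, and your index-shift argument for (\ref{E:sGevenco2}) and (\ref{E:sGevenco3}) is the expected bookkeeping.
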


\begin{theorem}
\label{T:sgeven}
Let $0\leq r,s\leq n-1$.  Let $I^{\sg}_r,I^{\sg}_s$ be  given by formula (\ref{E:sGeveninte}). Then
\[
\{I^{\sg}_r,I^{\sg}_s\}_f=0.
\]
\end{theorem}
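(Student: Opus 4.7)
The plan is to write $I^{\sg}_r = \alpha_1 A_r + \alpha_2 B_r + \alpha_3 C_r$ with
\[
A_r := P\,\Theta^{1,2n}_{2r,1} + P^{-1}\Theta^{1,2n}_{2r,0},\quad B_r := \Theta^{1,2n}_{2r+1,1},\quad C_r := \Theta^{1,2n}_{2r+1,0},
\]
where $P := (f_2f_4\cdots f_{2n})/(f_1f_3\cdots f_{2n-1})$, and then to expand $\{I^{\sg}_r,I^{\sg}_s\}_f$ as a quadratic polynomial in $\alpha_1,\alpha_2,\alpha_3$, showing that each of the six coefficients vanishes separately.

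The three coefficients not involving $\alpha_1$ are dispatched at once: $\{B_r,B_s\}_f$ and $\{C_r,C_s\}_f$ vanish by the first case of Lemma~\ref{L:PoiThetaeven} (both lower indices are odd and the $\epsilon$'s coincide), while $\{B_r,C_s\}_f + \{C_r,B_s\}_f$ vanishes by the odd-odd case of Corollary~\ref{C:sGeven}~(\ref{E:sGevenco1}).

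For the three remaining coefficients one first needs $\{P,\cdot\}_f$. From $\frac{\partial P}{\partial f_j} = (-1)^j P/f_j$ and the explicit form of $\Omega^{\sg}_{2n}$, the alternating sums $\sum_{i<j}(-1)^i$ and $\sum_{i>j}(-1)^i$ telescope --- the evenness of $p=2n$ is essential here --- yielding $\{P,f_j\}_f = -P f_j$ for every $j$. By bilinearity this gives the Euler-type identity $\{P,h\}_f = -P\,E(h)$, where $E:=\sum_k f_k \frac{\partial}{\partial f_k}$. Each $\Theta^{1,2n}_{r,\epsilon}$ is homogeneous in the $f_i$: of degree $0$ when $r$ is even and of degree $\pm 1$ when $r$ is odd (with the sign determined by $\epsilon$). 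Consequently $\{P,\Theta_{2r,\epsilon}\}_f = 0$ and $\{P,\Theta_{2r+1,\epsilon}\}_f = (-1)^\epsilon P\,\Theta_{2r+1,\epsilon}$, with the corresponding $P^{-1}$ identities differing only by an overall sign.

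Feeding these into Leibniz expansions, the $\alpha_1^2$ coefficient collapses to $\{A_r,A_s\}_f = \{\Theta_{2r,1},\Theta_{2s,0}\}_f + \{\Theta_{2r,0},\Theta_{2s,1}\}_f$, which is $0$ by the even-even case of~(\ref{E:sGevenco1}). For the $\alpha_1\alpha_2$ coefficient, the $P$-part reduces to the odd-odd left-hand side of~(\ref{E:sGevenco2}) at $\epsilon = 1$, and the Theta products pushed out by the $\{P,\Theta_{2r+1,1}\}_f$ and $\{P,\Theta_{2s+1,1}\}_f$ terms exactly match the right-hand side of~(\ref{E:sGevenco2}) and cancel it; the $P^{-1}$-part is settled in the same way using~(\ref{E:sGevenco3}) at $\epsilon = 1$, $\epsilon\pm 1 = 0$. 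The $\alpha_1\alpha_3$ coefficient is handled identically, with the roles of~(\ref{E:sGevenco2}) and~(\ref{E:sGevenco3}) swapped. The main obstacle is really just the bookkeeping in these $\alpha_1$-coefficients: one must verify that the ``boundary'' Theta products produced by $\{P^{\pm 1},\Theta_{2r+1,\epsilon}\}_f$ line up in sign and structure with the right-hand sides of~(\ref{E:sGevenco2}) and~(\ref{E:sGevenco3}), and that alignment is precisely what those two identities are designed to provide.
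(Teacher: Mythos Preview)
Your proof is correct and follows essentially the same approach as the paper: both expand $\{I^{\sg}_r,I^{\sg}_s\}_f$ as a quadratic polynomial in $\alpha_1,\alpha_2,\alpha_3$, dispatch the non-$\alpha_1$ coefficients via Lemma~\ref{L:PoiThetaeven} and~(\ref{E:sGevenco1}), compute the bracket of the monomial $P$ (the paper uses $F=P^{-1}$) with the $\Theta$'s through the Euler operator, and then cancel the $\alpha_1^2$, $\alpha_1\alpha_2$, $\alpha_1\alpha_3$ coefficients using~(\ref{E:sGevenco1}),~(\ref{E:sGevenco2}),~(\ref{E:sGevenco3}). The only cosmetic difference is your explicit telescoping derivation of $\{P,f_j\}_f=-Pf_j$, where the paper simply asserts $\{F^{\pm1},g\}_f=\pm F^{\pm1}E_f g$.
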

\begin{proof}
First of all, we  denote
\[
F=\frac{f_1f_3\ldots f_{2n-1}}{f_2f_4\ldots f_{2n}}.
\]
For any $g(f_1,f_2,\ldots,f_{2n})$  we find
$\{F^{\pm 1},g\}_f=\pm F^{\pm 1} E_fg$, where
\begin{equation}
\label{E:Ef_operator}
E_f=\sum_{i\geq 1}f_i\frac{\partial}{\partial f_i},
\end{equation}
which scales any homogeneous expression by its total degree.
Every term in the multi-sum has total degree $0$ if $r$ is even and degree $(-1)^{\epsilon+1}$
if $r$ is odd, hence
\begin{equation}
\label{PFT}
\{F^{\pm 1},\t^{1,p}_{r,\epsilon}\}_f=\left\{
\begin{array}{ll}
0&\ \mbox {if} \ r \ \mbox{even},\\
\mp (-1)^{\epsilon} F^{\pm 1} \t^{1,p}_{r,\epsilon}&\  \mbox{if}\ r\ \mbox{odd}.
\end{array}
\right. 
\end{equation}

Now we expand $\{I^{\sg}_r,I^{\sg}_s\}_f$ in terms of polynomials in $\alpha_1,\alpha_2,\alpha_3$ as follows
\[
\{I^{\sg}_r,I^{\sg}_s\}_f=\alpha_1^2A_1+\alpha_2^2A_2+\alpha_3^2 A_3+\alpha_1\alpha_2A_{12}+\alpha_1\alpha_3A_{13}+\alpha_2\alpha_3A_{23},
\]
where
\begin{align*}
A_1&=\{
F^{-1}\t^{1,2n}_{2r,1}+F\t^{1,2n}_{2r,0},F^{-1}\t^{1,2n}_{2s,1}+F\t^{1,2n}_{2s,0}\}_f,\\
A_2&=\{\t^{1,2n}_{2r+1,1},\t^{1,2n}_{2s+1,1}\}_f,\\
A_3&=\{\t^{1,2n}_{2r+1,0},\t^{1,2n}_{2s+1,0}\}_f,\\
A_{12}&=\{F^{-1}\t^{1,2n}_{2r,1}+F\t^{1,2n}_{2r,0},\t^{1,2n}_{2s+1,1}\}_f +
\{\t^{1,2n}_{2r+1,1},F^{-1}\t^{1,2n}_{2s,1}+F\t^{1,2n}_{2s,0}\}_f,\\
A_{13}&=\{F^{-1}\t^{1,2n}_{2r,1}+F\t^{1,2n}_{2r,0},\t^{1,2n}_{2s+1,0}\}_f
+\{\t^{1,2n}_{2r+1,0},F^{-1}\t^{1,2n}_{2s,1}+F\t^{1,2n}_{2s,0}\}_f,\\
A_{23}&=\{\t^{1,2n}_{2r+1,1},\t^{1,2n}_{2s+1,0}\}_f+\{\t^{1,2n}_{2r+1,0},\t^{1,2n}_{2s+1,1}\}_f.
\end{align*}
We prove that all these  coefficients equal $0$.
Using  Lemma  \ref{L:PoiThetaeven} and Corollary~\ref{C:sGeven}, we have $A_2=A_3=A_{23}=0$. We now expand
$A_1, A_{12}$ and $A_{13}$,
we obtain
\begin{align*}
A_1&=F^{-2}\{\t^{1,2n}_{2r,1},\t^{1,2n}_{2s,1}\}+\{\t^{1,2n}_{2r,1},\t^{1,2n}_{2s,0}\}+\{\t^{1,2n}_{2r,0},\t^{1,2n}_{2s,1}\}
+F^2\{\t^{1,2n}_{2r,0},\t^{1,2n}_{2s,0}\} \\
& \quad +\t^{1,2n}_{2r,1}\t^{1,2n}_{2s,1}\{F^{-1},F^{-1}\}+\t^{1,2n}_{2r,0}\t^{1,2n}_{2s,1}\{F,F^{-1}\}+
\t^{1,2n}_{2r,1}\t^{1,2n}_{2s,0}\{F^{-1},F\}+\t^{1,2n}_{2r,0}\t^{1,2n}_{2s,0}\{F,F\}\\
& \quad + F^{-1}\left(\t^{1,2n}_{2s,1}\{\t^{1,2n}_{2r,1},F^{-1}\}_f
+\t^{1,2n}_{2r,1}\{F^{-1},\t^{1,2n}_{2s,1}\}_f
+\t^{1,2n}_{2s,0}\{\t^{1,2n}_{2r,1},F\}_f+\t^{1,2n}_{2r,0}\{F,\t^{1,2n}_{2s,1}\}_f
\right)\\
& \quad +F\left(
\t^{1,2n}_{2r,1}\{F^{-1},\t^{1,2n}_{2s,0}\}_f+\t^{1,2n}_{2s,1}\{\t^{1,2n}_{2r,0},F^{-1}\}_f
 +\t^{1,2n}_{2r,0}\{F,\t^{1,2n}_{2s,0}\}_f+\t^{1,2n}_{2s,0}\{\t^{1,2n}_{2r,0},F\}_f
\right)\\
&=0,
\end{align*}
where the second and third terms cancel each other, due to (\ref{E:sGevenco1}), and all other terms vanish
according to equations (\ref{E:PoiTheta}),  and (\ref{PFT}).

We also get
\begin{align*}
A_{12}&=\ F^{-1}
\left( \{\t^{1,2n}_{2r,1},\t^{1,2n}_{2s+1,1}\}_f+\{\t^{1,2n}_{2r+1,1},\t^{1,2n}_{2s,1}\}\right)_f + F
\left(\{\t^{1,2n}_{2r,0},\t^{1,2n}_{2s+1,1}\}_f+\{\t^{1,2n}_{2r+1,1},\t^{1,2n}_{2s,0}\}_f\right)
\\
&\quad + \t^{1,2n}_{2r,1}\{F^{-1},\t^{1,2n}_{2s+1,1}\}_f
+\t^{1,2n}_{2r,0}\{F,\t^{1,2n}_{2s+1,1}\}_f
+ \t^{1,2n}_{2s,1}\{\t^{1,2n}_{2r+1,1},F^{-1}\}_f
+\t^{1,2n}_{2s,0}\{\t^{1,2n}_{2r+1,1},F\}_f
\\
&=\ F^{-1}
\left(\t^{1,2n}_{2r,1}\t^{1,2n}_{2s+1,1}-\t^{1,2n}_{2r+1,1}\t^{1,2n}_{2s,1}\right)
+F\left(\t^{1,2n}_{2s,0}\t^{1,2n}_{2r+1,1}-\t^{1,2n}_{2s+1,1}\t^{1,2n}_{2r,0}\right)
\\
&\quad - F^{-1}\t^{1,2n}_{2r,1}
\t^{1,2n}_{2s+1,1}
+ F\t^{1,2n}_{2r,0}
\t^{1,2n}_{2s+1,1}
 +F^{-1}\t^{1,2n}_{2s,1}
 \t^{1,2n}_{2r+1,1}
-F\t^{1,2n}_{2s,0}
\t^{1,2n}_{2r+1,1}
\\
&=0,
\end{align*}
where we have used  (\ref{E:sGevenco2}), (\ref{E:sGevenco3}), and (\ref{PFT}).
Similarly we get $A_{13}=0$. Hence, we have $\{I^{\sg}_r,I^{\sg}_s\}_f=0$.
\end{proof}
\subsection{The case  $d=2n$ \label{SS:sGodd}}
In this section, we consider a $2n$-dimensional map
\begin{equation}
 \label{E:sGodd}
 \widetilde{\sg}: (v_1,v_2,\cdots, v_{2n})\mapsto (v_2, v_3,\ldots,v_{2n}, v_1^{-1}\frac{\alpha_1v_2v_{2n}+\alpha_3}{\alpha_2v_2v_{2n}+\alpha_1}).
 \end{equation}
This map has $n$ integrals given by
\begin{equation}
\label{E:sGoddinte}
I^{\widetilde{\sg}}_{r}=\alpha_1\left(\frac{v_{2n}}{v_1}\Theta^{1,2n-1}_{2r,1}+\frac{v_1}{v_{2n}}\Theta^{1,2n-1}_{2r,0}\right)
+\alpha_2\Theta^{1,2n-1}_{2r+1,1}+\alpha_3\Theta^{1,2n-1}_{2r+1,0},
\end{equation}
where $0\leq r\leq n-1$ and the argument of Theta is $f_i=v_iv_{i+1}$. The sine-Gordon map~(\ref{E:sGodd}) has a symplectic structure $\Omega_{2n}^{\widetilde{\sg}}$, where
\begin{equation}
\label{E:sGoddsym}
\Omega_{p}^{\widetilde{\sg}}=\left(
 \begin{matrix}
 0 & v_1v_2&0&v_1v_4&\ldots &0&v_1v_{p}\\
 -v_2v_1&0&v_2v_3&0&\ldots&v_2v_{p-1}&0\\
0&-v_3v_2&0&v_3v_4&\ldots&0&v_3v_{p}\\
 \vdots&\vdots&\vdots&\vdots&\vdots&\vdots&\vdots\\
 -v_{p}v_1&0&-v_{p}v_3&0&\ldots&-v_{p}v_{p-1}&0
  \end{matrix}
 \right),
\end{equation}
cf. \cite{Iatrou2003Higher-Dimension,RQs}.
The Poisson bracket $\frac{1}{2}\nabla_v (g) \Omega_{p}^{\widetilde{\sg}} \left(\nabla_v (h)\right)^{T}$ is denoted $\{g,h\}_v$.
Before we prove that the integrals (\ref{E:sGoddinte}) are in involution with respect to this bracket, we first
establish the following Poisson brackets between Theta multi-sums:
\[
\{\Theta^{1,p}_{r,\epsilon},\Theta^{1,p}_{s,\delta}\}_v = \{\Theta^{1,p}_{r,\epsilon},\Theta^{1,p}_{s,\delta}\}_f |_{f_i=v_iv_{i+1}},
\]
where the right-hand-side is given by Propositions \ref{L:PoiThetaeven} and \ref{L:ThetaPoieven2}.
This follows as a corollary from the next Lemma. Consider the map $\R^p \rightarrow \R^{p-1}$,
\[
G_p:(v_1,v_2,\ldots,v_p ) \mapsto (v_1v_2,v_2v_3,\ldots, v_{p-1}v_p).
\]
\begin{lemma} \label{PBR1}
With $g,h$ differentiable functions on $\R^{p-1}$ we have
\[
\{ g \circ G_p , h\circ G_p\}_v =  \{ g , h \}_{f=G_p(v)},
\]
i.e. $G_p$ is a  Poisson map.
\end{lemma}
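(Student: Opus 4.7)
The plan is to use the fact that both brackets are biderivations and that pullback by $G_p$ is multiplicative: $(g_1g_2)\circ G_p=(g_1\circ G_p)(g_2\circ G_p)$. This implies that both sides of the identity $\{g\circ G_p,h\circ G_p\}_v=\{g,h\}_{f=G_p(v)}$ are biderivations of $(g,h)$, so once they agree on a set of algebra generators for the smooth functions on $\R^{p-1}$, they agree for all $g,h$. The coordinate functions $f_1,\ldots,f_{p-1}$ form such a set (locally, at points where $DG_p$ has full rank), so the whole problem collapses to proving $\{v_av_{a+1},v_bv_{b+1}\}_v=\{f_a,f_b\}_f\big|_{f=G_p(v)}$ for all $1\le a,b\le p-1$.

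The right-hand side is immediate from~(\ref{E:sGevenPs}): $\{f_a,f_b\}_f=f_af_b\,\mathrm{sgn}(b-a)$, which under $f=G_p(v)$ becomes $v_av_{a+1}v_bv_{b+1}\,\mathrm{sgn}(b-a)$. For the left-hand side I would expand $\{v_av_{a+1},v_bv_{b+1}\}_v$ via Leibniz in each slot, producing four terms, each a scalar multiple of $\{v_i,v_j\}_v$ with $i\in\{a,a+1\}$ and $j\in\{b,b+1\}$. From the explicit matrix $\Omega^{\widetilde{\sg}}_p$ one has $\{v_i,v_j\}_v=\tfrac12 v_iv_j\,\mathrm{sgn}(j-i)$ when $i+j$ is odd and $0$ otherwise. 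Thus exactly two of the four Leibniz terms survive: the aligned pairs $(a,b),(a+1,b+1)$ when $a+b$ is odd, or the crossed pairs $(a,b+1),(a+1,b)$ when $a+b$ is even. In either case the two surviving contributions carry matching signs, and their sum equals $v_av_{a+1}v_bv_{b+1}\,\mathrm{sgn}(b-a)$, the $\tfrac12$'s being absorbed by this doubling.

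The one mildly delicate point is the overlapping case $b=a+1$, where $v_av_{a+1}$ and $v_bv_{b+1}=v_{a+1}v_{a+2}$ share the factor $v_{a+1}$, so that the four Leibniz terms contain a $v_{a+1}^2$. Here sparsity of $\Omega^{\widetilde{\sg}}_p$ rescues the computation: $\{v_a,v_{a+2}\}_v=0$ because $a+(a+2)$ is even, while the nonzero brackets $\{v_a,v_{a+1}\}_v$ and $\{v_{a+1},v_{a+2}\}_v$ combine to give $v_av_{a+1}^2v_{a+2}=f_af_b$, matching the right-hand side. This single boundary configuration is the only step requiring a small separate check; all remaining cases are uniform and routine.
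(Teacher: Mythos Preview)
Your computation is correct and amounts to the same identity the paper proves, namely $dG_p\cdot\Omega^{\widetilde{\sg}}_p\cdot dG_p^{T}=2\,\Omega^{\sg}_{p-1}\big|_{f=G_p(v)}$: the paper states this as a single matrix equation and then invokes the chain rule $\nabla_v(g\circ G_p)=\nabla_f(g)\,dG_p$, whereas you verify the same matrix equation entry by entry, since $\{v_av_{a+1},v_bv_{b+1}\}_v$ is precisely the $(a,b)$ entry of $\tfrac12\,dG_p\,\Omega^{\widetilde{\sg}}_p\,dG_p^{T}$. One small point of phrasing: the reduction to coordinate functions is not really because $f_1,\ldots,f_{p-1}$ are ``algebra generators'' of $C^\infty(\R^{p-1})$ (they are not), nor does it have anything to do with the rank of $DG_p$; rather, both sides are first-order bidifferential operators in $(g,h)$, i.e.\ of the form $\sum_{i,j}\partial_{f_i}g\cdot M_{ij}\cdot\partial_{f_j}h$ evaluated at $f=G_p(v)$, and such operators are determined by their values on pairs of coordinate functions. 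With that correction your argument is complete and equivalent to the paper's.
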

\begin{proof}
The $(p-1) \times p$ Jacobian of the map $G_p$ is
\[
dG_p=\left( \begin{array}{ccccc}
v_2 & v_1 & 0 & \cdots & 0 \\
0 & v_3 & v_2 & \cdots & 0 \\
\vdots & & & & 0 \\
0 & 0 & \cdots & v_p & v_{p-1}
\end{array}
\right).
\]
By direct calculation, we have
\begin{equation}
 \label{E:sGrelation}
 dG_p \cdot \Omega^{\widetilde{\sg}}_p \cdot (dG_p)^{T} = 2 \Omega^{\sg}_{p-1}|_{f=G_p(v)}.
 \end{equation}
Applying $\nabla$ to $(g\circ G_p) (v) = g (f) |_{f=G_p(v)}$ (and omitting some arguments) we find
\[
\nabla_v (g \circ G_p) = \nabla_f (g) dG_p |_{f=G_p(v)}.
\]
Hence, we have
 \begin{eqnarray*}
 \{ g \circ G_p, h \circ G_p \}_v &=&\frac{1}{2} \nabla_v (g \circ G_p) \Omega^{\widetilde{\sg}}_p \nabla_v (g \circ G_p)^{T} \\
 &=& \frac{1}{2}\nabla_f (g) dG_p |_{f=G_p(v)} \Omega^{\widetilde{\sg}}_p (\nabla_f (h) dG_p)^T |_{f=G_p(v)} \\
 &=& \nabla_f (g)  \Omega^{\sg}_p (\nabla_f (h))^T |_{f=G_p(v)} \\
 &=&  \{g,h\}_{f=G_p(v)}
 \end{eqnarray*}
 \end{proof}

Now we will prove the involutivity of the integrals (\ref{E:sGoddinte}) of the sine-Gordon map (\ref{E:sGodd}).
\begin{theorem}
\label{T:sGodd}
Let  $I^{\widetilde{\sg}}_r$ and $I^{\widetilde{\sg}}_s$, with $0\leq r,s\leq n-1$, be given by the formula
(\ref{E:sGoddinte}). Then we have
\[
\{I^{\widetilde{\sg}}_r,I^{\widetilde{\sg}}_s\}_v=0.
\]
\end{theorem}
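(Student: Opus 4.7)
The plan is to mirror the proof of Theorem~\ref{T:sgeven} with the scalar factor $F$ replaced by $H := v_1/v_{2n}$. Writing
\[
I^{\widetilde{\sg}}_r = \alpha_1\bigl(H^{-1}\Theta^{1,2n-1}_{2r,1} + H\,\Theta^{1,2n-1}_{2r,0}\bigr) + \alpha_2\Theta^{1,2n-1}_{2r+1,1} + \alpha_3\Theta^{1,2n-1}_{2r+1,0},
\]
I will expand $\{I^{\widetilde{\sg}}_r,I^{\widetilde{\sg}}_s\}_v$ as a quadratic polynomial in $\alpha_1,\alpha_2,\alpha_3$ whose six coefficients $A_1, A_2, A_3, A_{12}, A_{13}, A_{23}$ are the exact formal analogues of those written down in the proof of Theorem~\ref{T:sgeven}, with $F$ replaced by $H$ and $\{\cdot,\cdot\}_f$ replaced by $\{\cdot,\cdot\}_v$. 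The task reduces to showing all six vanish.

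Two supporting inputs do the work. First, each $\Theta^{1,2n-1}_{r,\epsilon}$ depends on the $v_i$ only through $f_i = v_iv_{i+1}$, so Lemma~\ref{PBR1} gives
\[
\{\Theta^{1,2n-1}_{r,\epsilon},\Theta^{1,2n-1}_{s,\delta}\}_v = \{\Theta^{1,2n-1}_{r,\epsilon},\Theta^{1,2n-1}_{s,\delta}\}_f\big|_{f_i=v_iv_{i+1}},
\]
hence Lemmas~\ref{L:PoiThetaeven},~\ref{L:ThetaPoieven2} and Corollary~\ref{C:sGeven} apply verbatim, and in particular this already yields $A_2 = A_3 = A_{23} = 0$. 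Second, a direct computation with the sparse structure of $\Omega^{\widetilde{\sg}}_{2n}$ (only indices of opposite parity couple) will show that for any $g = g(f_1,\ldots,f_{2n-1})$,
\[
\{H^{\pm 1}, g\}_v = \pm H^{\pm 1} E_f g, \qquad E_f = \sum_i f_i\frac{\partial}{\partial f_i},
\]
and that $\{H, H^{-1}\}_v = 0$. Since $\Theta^{1,2n-1}_{r,\epsilon}$ is $E_f$-homogeneous of degree $0$ when $r$ is even and of degree $-(-1)^\epsilon$ when $r$ is odd, specialising to $g = \Theta^{1,2n-1}_{r,\epsilon}$ reproduces (\ref{PFT}) with $H$ in place of $F$.

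With these two inputs in hand, the vanishing of $A_1$, $A_{12}$ and $A_{13}$ follows by a term-by-term repetition of the cancellations carried out in the proof of Theorem~\ref{T:sgeven}, invoking Corollary~\ref{C:sGeven} and the $H$-analogue of (\ref{PFT}) at exactly the same junctures. The main obstacle is verifying the identity $\{H^{\pm 1}, g\}_v = \pm H^{\pm 1} E_f g$: one has to check that the contribution of $\partial H/\partial v_1$ (which through $\Omega^{\widetilde{\sg}}_{2n}$ reaches only the even indices $j$) and that of $\partial H/\partial v_{2n}$ (which reaches only the odd indices) combine, after applying the chain rule $\partial g/\partial v_j = v_{j-1}\partial g/\partial f_{j-1} + v_{j+1}\partial g/\partial f_j$, to give $\sum_{j=1}^{2n} v_j\,\partial g/\partial v_j = 2 E_f g$ with the correct prefactor $v_1/(2v_{2n})$. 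Once this bookkeeping is done, no further creativity is required.
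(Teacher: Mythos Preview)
Your proposal is correct and follows essentially the same route as the paper: the paper also replaces $F$ by $V=v_1/v_{2n}$ (your $H$), invokes Lemma~\ref{PBR1} to transfer the $\Theta$--brackets from $\{\cdot,\cdot\}_v$ to $\{\cdot,\cdot\}_f$, establishes the analogue of~(\ref{PFT}) via the Euler operator, and then notes that the six coefficients $B_I$ vanish by the identical cancellations as in Theorem~\ref{T:sgeven}. The only cosmetic difference is that the paper writes the intermediate step as $V^{\pm1}E_v\Theta$ rather than your $\pm H^{\pm1}E_f g$, but since $E_v=2E_f$ on functions of the $f_i$ and the bracket carries a factor $\tfrac12$, these agree.
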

\begin{proof}
 With $V=v_1/v_{2n}$ we have
\begin{equation*}
\{V^{\pm 1},\t^{1,p}_{r,\epsilon}\}_v=V^{\pm 1}E_v\t^{1,p}_{r,\epsilon}= \left\{
\begin{array}{ll}
0&\ \mbox {if} \ r \ \mbox{even},\\
\mp (-1)^{\epsilon} V^{\pm} \t^{1,p}_{r,\epsilon}&\  \mbox{if}\ r\ \mbox{odd}.
\end{array}
\right.
\end{equation*}

The Poisson bracket between 2 integrals is expanded as
\[
\{I^{\widetilde{\sg}}_r,I^{\widetilde{\sg}}_s\}_v=\alpha_1^2B_1+\alpha_2 ^2 B_2+\alpha_3 ^2 B_3 +\alpha_1\alpha_2 B_{12}+ \alpha_1\alpha_3 B_{13}+\alpha_2\alpha_3 B_{23},
\]
where the coefficients $B_I$ are similar to the $A_I$ given in section \ref{SS:sGeven}, replacing $F$ by $V$ and $2n$ by $2n-1$.
The rules for simplification are also similar.
Therefore, $\{I^{\widetilde{\sg}}_r,I^{\widetilde{\sg}}_s\}_v=0$.
\end{proof}
\section{Involutivity of  mKdV integrals\label{A:mkdv}}
We consider the $d$-dimensional mKdV map
\begin{equation}
\label{E:mkdvmap}
(v_1,v_1,\ldots, v_d)\mapsto\left(v_2,v_3,\ldots, v_d, v_1 \displaystyle{\frac{\beta_1 v_{d}+\beta_2 v_2}{\beta_1 v_2+\beta_3v_{d}}}\right).
\end{equation}
As shown in \cite{KRQ}, this map has $\lfloor (d-1)/2 \rfloor$ integrals given by the formula (\ref{E:mkdvinte}) with $0<2r<d$.
If $d=2n+1$, the map (\ref{E:mkdvmap})  reduces to a $2n$-dimensional map with exactly
$n$ integrals via a reduction $z_i=v_{i+1}/v_i$. For the other case, where $d=2n+2$, the map (\ref{E:mkdvmap})
reduces to a $2n$-dimensional map with exactly $n$ integrals via the reduction $z_i=v_{i+2}/v_{i}$.
We will show that the integrals of these reduced maps are in involution.
In each case, we present a relationship between the relevant symplectic structures and the symplectic structures of the
sine-Gordon map in the even case (\ref{E:sGevensym}). This relation can be used to derive properties of Theta with new
symplectic structures.

\subsection{The case $d=2n+1$\label{AS:mkdveven}}
Using the reduction $z_i=v_{i+1}/v_i$, we obtain the  map
\begin{equation}
\label{E:mkdvevenmap}
\mk:
(z_1,z_2,\ldots, z_{2n})\mapsto (z_2,z_3,\ldots,z_{2n},\frac{1}{z_1z_2\ldots z_{2n}}.\
\frac{\beta_1 z_2z_3\ldots z_{2n}+\beta_2}{\beta_1+\beta_3z_2z_3\ldots z_{2n}}).
\end{equation}

The integrals of this map are given by
\begin{equation}
\label{E:mkdveveninte}
I^{\mk}_r=\beta_1\left(z_1z_2\ldots z_{2n}\t^{1,2n}_{2r-1,0}+\frac{1}{z_1z_2\ldots z_{2n}}\t^{1,2n}_{2r-1,1}\right)
+ \beta_2\t^{1,2n}_{2r,1}+\beta_3\t^{1,2n}_{2r,0},
\end{equation}
where arguments for Theta are $f_i=z_1^2z_2^2\ldots z_{i-1}^2z_{i}$. Here we have used an 'inverse reduction',
$v_{i}=v_1z_1z_2\cdots z_{i-1}$ to express $f_i=v_iv_{i+1}$ in terms of the $z_j$ and we omitted the $v_1$ dependence
as both the integral and the map do not depend on it.

We obtain a symplectic structure $\Omega^{\mk}_{2n}$ for the map~(\ref{E:mkdvevenmap}), where
\begin{equation}
\label{E:mkdvevensymG}
\Omega^{\mk}_ p:=\left(
\begin{matrix}
0& z_1z_2&-z_1z_3&z_1z_4&\ldots   &(-1)^p z_1z_{p}\\
-z_2z_1&0& z_2z_3& z_2z_4&\ldots &(-1)^{p-1}z_2z_{p}\\
z_3z_1&-z_3z_2&0&z_3z_4&\ldots   &(-1)^{p-2}z_3z_{p}\\
\vdots&\vdots&\vdots&\vdots&\cdots&\vdots\\
(-1)^{p-1}z_{p}z_1&(-1)^{p}z_{p}z_2&(-1)^{p-1}z_{p}z_3& (-1)^{p}z_{p}z_4&\ldots &0
\end{matrix}
\right),
\end{equation}
cf. \cite{Iatrou2003Higher-Dimension,RQs}. This gives us a Poisson bracket $\{g,h\}_z =\nabla_z (g) \Omega_{2n}^{\mk} \left(\nabla_z (h)\right)^{T}$.
As before we can express the $z$-Poisson brackets between Theta multi-sums in terms of the
corresponding $f$-Poisson brackets. Consider the map
\[
M_p: (z_1,z_2,\ldots, z_{p}) \mapsto (z_1,z_1^2z_2,\ldots,z_1^2z_2^2\cdots z_{p-1}^2z_p).
\]
We have the following result.
\begin{lemma} \label{PBR2}
With $g,h$ differentiable functions on $\R^{p}$ we have
\[
\{ g \circ M_p , h\circ M_p\}_z = \{ g , h \}_{f=M_p(z)},
\]
i.e. $M_p$ is a Poisson map.
\end{lemma}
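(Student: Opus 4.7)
My plan is to mirror the proof of Lemma~\ref{PBR1}: compute the Jacobian of $M_p$, establish the push-forward identity
\[
dM_p \cdot \Omega^{\mk}_p \cdot (dM_p)^T \;=\; \Omega^{\sg}_p\big|_{f=M_p(z)},
\]
and then close the argument via the chain rule exactly as in the sine-Gordon case. The factor of $\tfrac12$ that appears in Lemma~\ref{PBR1} is absent here because the mKdV bracket $\{\cdot,\cdot\}_z$ has no $\tfrac12$ in its definition, while the corresponding sine-Gordon bracket $\{\cdot,\cdot\}_f$ in Section~\ref{SS:sGeven} likewise has no $\tfrac12$.

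With $f_i = z_1^2 z_2^2 \cdots z_{i-1}^2 z_i$ denoting the $i$-th component of $M_p$, direct differentiation produces a lower-triangular Jacobian with $(dM_p)_{ij}=0$ for $j>i$, $(dM_p)_{ii} = f_i/z_i$, and $(dM_p)_{ij} = 2f_i/z_j$ for $j<i$. For the matrix identity, I would expand the $(k,l)$-entry of the triple product as
\[
\sum_{i\leq k,\,j\leq l}(dM_p)_{ki}\,(\Omega^{\mk}_p)_{ij}\,(dM_p)_{lj},
\]
pull out the common factor $f_k f_l$, and verify that the remaining rational expression in the $z$'s collapses to $\mathrm{sgn}(l-k)$. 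The small cases $p=2,3$ can be confirmed by hand (for $p=2$ one obtains the off-diagonal entry $f_1 f_2 = z_1^3 z_2$ on the nose, matching $\Omega^{\sg}_2$ evaluated at $(f_1,f_2)=(z_1, z_1^2 z_2)$), and the general case follows by the same structural mechanism.

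The main obstacle is the bookkeeping in the general case: the factor $2$ in the off-diagonal entries of $dM_p$ has to cancel against the alternating signs in $\Omega^{\mk}_p$ by a telescoping argument so that only two boundary contributions survive. A cleaner alternative is to pass to logarithmic coordinates $y_i = \log z_i$ and $u_i = \log f_i$; then $M_p$ becomes the affine map $u_i = y_i + 2(y_1+\cdots+y_{i-1})$ with constant unipotent Jacobian $A$, and both Poisson tensors reduce to constant antisymmetric sign matrices $S^{\mk}, S^{\sg}$ after factoring out the monomials $z_iz_j$ and $f_if_j$ respectively. The identity then boils down to the purely combinatorial statement $A\,S^{\mk}\,A^T = S^{\sg}$, which can be proved by induction on $p$ (columns and rows extend by one at a time, and the telescoping becomes transparent). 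Once the matrix identity is in hand, the chain rule $\nabla_z(g\circ M_p) = \nabla_f(g)|_{f=M_p(z)}\cdot dM_p$ yields
\[
\{g\circ M_p,\, h\circ M_p\}_z \;=\; \nabla_f(g)\, \Omega^{\sg}_p\, (\nabla_f h)^T\big|_{f=M_p(z)} \;=\; \{g,h\}_{f=M_p(z)},
\]
which is the claim.
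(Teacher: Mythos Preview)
Your proposal is correct and follows essentially the same approach as the paper: compute the Jacobian of $M_p$, establish the matrix identity $dM_p\,\Omega^{\mk}_p\,(dM_p)^T=\Omega^{\sg}_p|_{f=M_p(z)}$, and conclude via the chain rule as in Lemma~\ref{PBR1}. The paper simply asserts the matrix identity (``a calculation shows''), whereas you supply additional detail (including the logarithmic-coordinates alternative), but the structure of the argument is identical.
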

\begin{proof}
The $p \times p$ Jacobian of the map $M_p$ is
\[
dM_p=\left\{ \begin{array}{ll}
0 & i<j, \\
\prod_{k=1}^{i-1} z_k^2 & i=j, \\
2z_iz_j^{-1}\prod_{k=1}^{i-1} z_k^2 & i>j. \\
\end{array}
\right.
\]
and a calculation shows
\begin{equation}
\label{mkdvsymrela}
dM_p.\Omega^{\mk}_{p}.dM_p^{T}=\Omega^{\sg}_{p}\mid_{f=M_p(z)}.
\end{equation}
The argument is finished along the lines of the proof for Lemma \ref{PBR1}.
\end{proof}

We are ready now to prove the following theorem
\begin{theorem}
Let $I^{\mk}_r$ and $I^{\mk}_s$ be given by the formula~(\ref{E:mkdveveninte}) with $1\leq r,s\leq n$.
 Then we have
\begin{equation}
\label{E:MkdvInvo1}
\{I^{\mk}_r,I^{\mk}_s\}_z=0.
\end{equation}
\end{theorem}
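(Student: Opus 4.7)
The plan is to reduce every Poisson bracket computation to one in the $f$-variables, where Lemmas \ref{L:PoiThetaeven}, \ref{L:ThetaPoieven2} and Corollary \ref{C:sGeven} are already available, and then mimic the polynomial-expansion argument from the proof of Theorem \ref{T:sgeven}.

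First, set $Z:=z_1z_2\cdots z_{2n}$. A short telescoping identity shows $F\circ M_{2n}=1/Z$, where $F=f_1f_3\cdots f_{2n-1}/(f_2f_4\cdots f_{2n})$ is the factor used in the sine-Gordon proof. Because both $Z^{\pm 1}$ and the Thetas factor through $M_{2n}$, the Poisson map property (Lemma \ref{PBR2}) combined with identity (\ref{PFT}) yields the analogue
\begin{equation*}
\{Z^{\pm 1},\Theta^{1,2n}_{r,\epsilon}\}_z =
\begin{cases} 0, & r\ \text{even},\\ \pm(-1)^\epsilon Z^{\pm 1}\Theta^{1,2n}_{r,\epsilon}, & r\ \text{odd},\end{cases}
\end{equation*}
and lets us rewrite every $\{\Theta^{1,2n}_{r,\epsilon},\Theta^{1,2n}_{s,\delta}\}_z$ as the corresponding $f$-bracket evaluated at $f=M_{2n}(z)$.

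Next, expand $\{I^{\mk}_r,I^{\mk}_s\}_z=\sum_{i\leq j}\beta_i\beta_j B_{ij}$ in exact analogy with the $A_i, A_{ij}$ of Theorem \ref{T:sgeven}. Since the non-$\beta_1$ pieces of $I^{\mk}_r$ involve $\Theta^{1,2n}_{2r,\epsilon}$ with even first index, Lemma \ref{L:PoiThetaeven} and the even case of (\ref{E:sGevenco1}) give $B_2=B_3=B_{23}=0$ immediately. For $B_1$, apply the Leibniz rule to the four cross terms $\{Z^a\Theta^{1,2n}_{2r-1,\epsilon},Z^b\Theta^{1,2n}_{2s-1,\delta}\}_z$ with $a,b\in\{\pm 1\}$: the two with $a=b$ vanish outright (the $\Theta$-$\Theta$ and $Z$-$Z$ brackets are zero, and the two $Z$-$\Theta$ cross products cancel), while the two with $a=-b$ collapse, after the analogous cancellation of $Z$-$\Theta$ cross products, to $\{\Theta^{1,2n}_{2r-1,0},\Theta^{1,2n}_{2s-1,1}\}_z+\{\Theta^{1,2n}_{2r-1,1},\Theta^{1,2n}_{2s-1,0}\}_z$, which is $0$ by the odd case of (\ref{E:sGevenco1}). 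Finally, in $B_{12}$ and $B_{13}$ the brackets $\{Z^{\pm 1},\Theta^{1,2n}_{2s,\epsilon}\}_z$ and $\{\Theta^{1,2n}_{2r,\epsilon},Z^{\pm 1}\}_z$ vanish because $2r,2s$ are even, and what remains are combinations of the form $\{\Theta^{1,2n}_{2r-1,\epsilon'},\Theta^{1,2n}_{2s,\epsilon}\}+\{\Theta^{1,2n}_{2r,\epsilon},\Theta^{1,2n}_{2s-1,\epsilon'}\}$ that match the even-outer-index cases of (\ref{E:sGevenco2}) and (\ref{E:sGevenco3}), hence vanish.

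The main obstacle is purely bookkeeping: tracking the $(-1)^\epsilon$ sign factors generated by the scaling identity for $Z^{\pm 1}$, and verifying that every residual Theta-Theta sum lines up with an instance of Corollary \ref{C:sGeven}. No new Poisson bracket identity is needed beyond those already established in Appendix A, because all the structural work has been pushed into Lemma \ref{PBR2}.
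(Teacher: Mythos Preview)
Your proposal is correct and follows essentially the same approach as the paper: transport the bracket to the $f$-variables via Lemma~\ref{PBR2}, derive the analogue of (\ref{PFT}) for $Z^{\pm1}$, expand $\{I^{\mk}_r,I^{\mk}_s\}_z$ as a polynomial in $\beta_1,\beta_2,\beta_3$, and kill each coefficient using Lemma~\ref{L:PoiThetaeven} and Corollary~\ref{C:sGeven}. The only cosmetic differences are your choice $Z=z_1\cdots z_{2n}$ (the paper takes the reciprocal) and your slightly tidier organisation of $B_1$ by grouping the four cross terms according to whether the $Z$-exponents agree or differ; the paper instead writes out all the Leibniz terms and cancels them line by line.
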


\begin{proof}
With $Z=(z_1z_2\ldots z_{2n})^{-1}$ we have $F^{\pm 1} \circ M_{2n} = Z^{\pm 1}$. Thus, Lemma \ref{PBR2} implies
\begin{equation} \label{ZT}
\{Z^{\pm 1},\Theta^{1,2n}_{r,\epsilon}\}_z=\left\{
\begin{array}{ll}
0&\ \mbox {if} \ r \ \mbox{even},\\
\mp (-1)^{\epsilon} Z^{\pm 1} \t^{1,2n}_{r,\epsilon}&\  \mbox{if}\ r\ \mbox{odd}.
\end{array}
\right. 
\end{equation}

Writing the left hand side of equation \eqref{E:MkdvInvo1} as
\begin{equation}
\{I^{\mk}_r,I^{\mk}_s\}_z=\beta_1^2P_1+\beta_2^2P_2+\beta_3^2P_3+
\beta_1\beta_2P_{12}+\beta_1\beta_3P_{13}+\beta_2\beta_3P_{23},
\end{equation}
yields coefficients $P_I$ similar to the $A_I$ given in section \ref{SS:sGeven}, replacing $F$ by $Z$, $2r$ by $2r-1$,
and $2s$ by $2s-1$. Now that we know the brackets between $Z$, $Z^{-1}$, and the $\Theta^{1,2n}_{2s,1}$, we can expand
the coefficient and show they vanish.

As before, the coefficients $P_{2}$, $P_3$, and $P_{23}$ are the easy ones. For $P_1$ we get, using equation (\ref{ZZ} and
Lemma \ref{PBR2} in conjunction with equations $(\ref{E:sGevenco1})$ and (\ref{E:PoiTheta}),
\begin{align*}
P_1
&=
Z\left(\Theta^{1,2n}_{2s-1,0}\{\Theta^{1,2n}_{2r-1,0},Z\}_z+\Theta^{1,2n}_{2r-1,0}\{Z,\Theta^{1,2n}_{2s-1,0}\}_z
+\Theta^{1,2n}_{2s-1,1}\{\Theta^{1,2n}_{2r-1,0},Z^{-1}\}_z+\Theta^{1,2n}_{2r-1,1}\{Z^{-1},\Theta^{1,2n}_{2s-1,0}\}_z
\right)
\\
&\quad +
Z^{-1}\left(\Theta^{1,2n}_{2s-1,1}\{\Theta^{1,2n}_{2r-1,1},Z\}_z+\Theta^{1,2n}_{2r-1,1}\{Z^{-1},\Theta^{1,2n}_{2s-1,1}\}_z
+\Theta^{1,2n}_{2r-1,0}\{Z,\Theta^{1,2n}_{2s-1,1}\}_z+\Theta^{1,2n}_{2s-1,0}\{\Theta^{1,2n}_{2r-1,1},Z\}_z
\right)
\\
&\quad +
\left(\{\Theta^{1,2n}_{2r-1,0},\Theta^{1,2n}_{2s-1,1}\}_z+\{\Theta^{1,2n}_{2r-1,1},\Theta^{1,2n}_{2s-1,0}\}_z\right)
\\
&=Z^2\left(-\Theta^{1,2n}_{2s-1,0}\Theta^{1,2n}_{2r-1,0}+\Theta^{1,2n}_{2r-1,0}(-1)^{1}\Theta^{1,2n}_{2s-1,0}\right)+
Z^{-2}\left(-\Theta^{1,2n}_{2s-1,1}\Theta^{1,2n}_{2r-1,1}+\Theta^{1,2n}_{2r-1,1}\Theta^{1,2n}_{2s-1,1}\right)
\\
&\quad +
\left(\{\Theta^{1,2n}_{2r-1,0},\Theta^{1,2n}_{2s-1,1}\}_z+\{\Theta^{1,2n}_{2r-1,1},\Theta^{1,2n}_{2s-1,0}\}_z\right)
\\
&\quad +
\Theta^{1,2n}_{2s-1,1}\Theta^{1,2n}_{2r-1,0}-\Theta^{1,2n}_{2r-1,0}\Theta^{1,2n}_{2s-1,1}+
\Theta^{1,2n}_{2s-1,0}\Theta^{1,2n}_{2r-1,1}-\Theta^{1,2n}_{2r-1,1}\Theta^{1,2n}_{2s-1,0}
\\
&=0,
\end{align*}
where we have used (\ref{ZT}) and and~\eqref{E:sGevenco1}.

Expanding $P_{12}$ yields
\begin{align*}
P_{12}&=Z\left(\{\Theta^{1,2n}_{2r-1,0},\Theta^{1,2n}_{2s,1}\}_z+\{\Theta^{1,2n}_{2r,1},\Theta^{1,2n}_{2s-1,0}\}_z\right)+
Z^{-1}\left(\{\Theta^{1,2n}_{2r-1,1},\Theta^{1,2n}_{2s,1}\}_z+\{\Theta^{1,2n}_{2r,1},\Theta^{1,2n}_{2s-1,1}\}_z\right)
\\
&\quad +
\Theta^{1,2n}_{2r-1,0}\{Z,\Theta^{1,2n}_{2s,1}\}_z+\Theta^{1,2n}_{2s-1,0}\{\Theta^{1,2n}_{2r,1},Z\}_z+
\Theta^{1,2n}_{2r-1,1}\{Z^{-1},\Theta^{1,2n}_{2s,1}\}+\Theta^{1,2n}_{2s-1,1}\{\Theta^{1,2n}_{2r,1},Z^{-1}\}_z\\
&=0,
\end{align*}
where the first  and the second terms, as well as  the third and the fourth  terms cancel each other,
 and the last four terms are equal to zero.

Similarly, we get $P_{13}=0$.
\end{proof}

\subsection{The case $d=2n+2$\label{AS:mkdvodd}}
Now using  a reduction $w_i=v_{i+2}/v_i$, we obtain the map
\begin{equation}
\label{E:mkdvoddmap}
\widetilde{\mk}: \ (w_1,w_2,\ldots, w_{2n})\mapsto \left(w_2,w_3,\ldots,w_{2n},\frac{1}{w_1w_3\ldots w_{2n-1}}. \
\frac{\beta_1 w_2w_4\ldots w_{2n}+\beta_2}{\beta_1+\beta_3w_2w_4\ldots w_{2n}}\right).
\end{equation}
Integrals of this map are given by
\begin{equation}
\label{E:mkdvoddinte}
I^{\widetilde{\mk}}_r=\alpha_1\left(w_2w_4\ldots w_{2n}\t^{1,2n+1}_{2r-1,0}+\frac{1}{w_2w_4\ldots w_{2n}}\t^{1,2n+1}_{2r-1,1}\right)
+ \alpha_2\t^{1,2n+1}_{2r,1}+\alpha_3\t^{1,2n+1}_{2r,0},
\end{equation}
where $\t=\t[e_i]$ with $e_i=f_{i-1}$, with $f_0=1$ and $f_{i}=w_1w_2\ldots w_{i}$ ($i>0$).
Note, we have changed notation in order to relate the next Poisson bracket to the bracket $\{,\}_f$.
The argument of $\Theta$ is $v_iv_{i+1}=:e_i$. In the 'inverse reduction', we have
\[
v_n=\left\{\begin{array}{ll}
v_1 \prod_{j=1}^i w_{2j-1} & n=2i+1, \\
v_2 \prod_{j=1}^{i-1} w_{2j} & n=2i.
\end{array}\right.
\]
 Therefore (similar to the case $d=2n+1$) both the reduced map as well as  the reduced integrals depend on the variables $w_i$.
Using \eqref{E:recur2}, we obtain
\begin{eqnarray*}
\t^{1,2n+1}_{s,\epsilon}[e_i]&=&\t^{2,2n+1}_{s,\epsilon}[e_i]+\t^{2,2n+1}_{s-1,\epsilon+1}[e_i]\\
&=&\t^{1,2n}_{s,\epsilon}[f_i]+\t^{1,2n}_{s-1,\epsilon+1}[f_i].
\end{eqnarray*}
Let
\[
K_p: (w_1,w_2,\ldots, w_{p}) \mapsto (w_1,w_1w_2,\ldots,w_1w_2\cdots w_p).
\]
and $W=w_2w_4\ldots w_{2n}$. Then, the integrals can be written
\begin{equation}
\label{E:mkdvoddinte1}
\begin{split}
I^{\widetilde{\mk}}_r=\alpha_1\left( W^{-1}\left(\t^{1,2n}_{2r-1,0}+\t^{1,2n}_{2r-2,1}\right)+
W\left(\t^{1,2n}_{2r-1,1}+\t^{1,2n}_{2r-2,0}\right)\right)
\\
+ \alpha_2\left(\t^{1,2n}_{2r,1}+\t^{1,2n}_{2r-1,0}\right)+\alpha_3\left(\t^{1,2n}_{2r,0}+\t^{1,2n}_{2r-1,1}\right).
\end{split}
\end{equation}
where $\t=\t[f_i]$ with $f=K_p(w)$.

The map (\ref{E:mkdvoddmap}) has a symplectic structure $\Omega^{\widetilde{\mk}}_{2n}$, where
\begin{equation}
\label{E:mkdvoddsym}
\Omega^{\widetilde{\mk}}_{p}=\left(
\begin{matrix}
0& w_1w_2&0&0&\ldots   &0\\
-w_2w_1&0& w_2w_3&0&\ldots  &0\\
0&-w_3w_2&0& w_3w_4 &\ldots &0\\
\vdots&\vdots&\vdots&\vdots&\cdots&\vdots\\
0&0&0 &\ldots&0&w_{p-1}w_p\\
0&0&0& \ldots&-w_{p}w_{p-1}&0
 \end{matrix}
\right).
\end{equation}
This gives us a Poisson bracket $\{g,h\}_w =\nabla_w (g) \Omega_{2n}^{\widetilde{\mk}} \left(\nabla_w (h)\right)^{T}$.
Once again we can express the $w$-Poisson brackets between Theta multi-sums in terms of the
corresponding $f$-Poisson brackets.
\begin{lemma} \label{PBR3}
With $g,h$ differential functions on $\R^{p}$ we have
\[
\{ g \circ K_p , h\circ K_p\}_w = \{ g , h \}_{f=K_p(w)},
\]
i.e. $K_p$ is a Poisson map.
\end{lemma}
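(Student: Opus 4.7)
The plan is to follow the template of the proofs of Lemmas \ref{PBR1} and \ref{PBR2}. First, I would compute the Jacobian $dK_p$. Next, I would verify the symplectic relation
\[
dK_p \cdot \Omega^{\widetilde{\mk}}_p \cdot (dK_p)^T = \Omega^{\sg}_p \big|_{f=K_p(w)}
\]
by a direct index-by-index calculation. Finally, I would combine this with the chain rule $\nabla_w(g\circ K_p) = \nabla_f(g)|_{f=K_p(w)} \cdot dK_p$ to conclude that $\{g\circ K_p, h\circ K_p\}_w = \{g,h\}_{f=K_p(w)}$, in exactly the manner used at the end of the proof of Lemma \ref{PBR1}.

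Since $f_i = w_1 w_2 \cdots w_i$, one has $\partial f_i/\partial w_j = f_i/w_j$ for $j \leq i$ and $0$ otherwise, so $dK_p$ is lower triangular with $(dK_p)_{ij} = f_i/w_j$ for $j \leq i$. The matrix $\Omega^{\widetilde{\mk}}_p$ is tridiagonal with $(\Omega^{\widetilde{\mk}}_p)_{k,k+1} = -(\Omega^{\widetilde{\mk}}_p)_{k+1,k} = w_k w_{k+1}$, so for $i<j$ the $(i,j)$-entry of $dK_p \cdot \Omega^{\widetilde{\mk}}_p \cdot (dK_p)^T$ becomes
\[
\sum_{k=1}^{i} \frac{f_i}{w_k}\,w_k w_{k+1}\,\frac{f_j}{w_{k+1}} \;-\; \sum_{k=2}^{i} \frac{f_i}{w_k}\,w_k w_{k-1}\,\frac{f_j}{w_{k-1}} = i\,f_i f_j - (i-1)\,f_i f_j = f_i f_j,
\]
since every individual summand collapses to the constant $f_i f_j$. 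This matches $(\Omega^{\sg}_p)_{ij}|_{f=K_p(w)}$; the case $i>j$ follows by antisymmetry, and the diagonal is automatic.

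I do not anticipate a real obstacle. Once each summand has been observed to reduce to $f_i f_j$, the whole computation is just a matter of counting how many values of $k$ give a nonzero contribution to each of the two sums, which is essentially a careful reading of the tridiagonal structure of $\Omega^{\widetilde{\mk}}_p$ together with the lower-triangular structure of $dK_p$. This is strictly easier than the analogous calculation \eqref{mkdvsymrela} in Lemma \ref{PBR2}, where $\Omega^{\mk}_p$ was dense. Once the symplectic relation is in hand, the Poisson-map conclusion will follow from the mechanical chain-rule argument used in Lemma \ref{PBR1}, with no new ingredients.
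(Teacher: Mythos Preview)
Your proposal is correct and follows exactly the same approach as the paper: the paper's proof consists solely of stating the relation $dK_p\,\Omega^{\widetilde{\mk}}_p\,dK_p^T=\Omega^{\sg}_p\mid_{f=K_p(w)}$ (equation~\eqref{E:mkdvoddrela}) and implicitly appealing to the chain-rule argument from Lemma~\ref{PBR1}. Your explicit index computation verifying this relation is a faithful (and in fact more detailed) rendering of what the paper leaves to the reader.
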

\begin{proof}
This follows from
\begin{equation}
\label{E:mkdvoddrela}
dK_p\Omega^{\widetilde{\mk}}_p dK_p^T = \Omega^{\sg}_{p} \mid_{f=K_p(w)}.
\end{equation}
\end{proof}
Because $F^{\pm 1} \circ K_{2n} = W^{\pm 1}$ this Lemma implies that $\{W,W^{-1}\}_w=0$,
\[
\{W^{\pm 1},\Theta^{1,2n}_{r,\epsilon}\}_w=\left\{
\begin{array}{ll}
0&\ \mbox {if} \ r \ \mbox{even},\\
\mp (-1)^{\epsilon} W^{\pm 1} \t^{1,2n}_{r,\epsilon}&\  \mbox{if}\ r\ \mbox{odd},
\end{array}
\right. 
\]
and we can also evaluate the brackets between $\t^{1,2n}_{r,\epsilon}$. Thus, the following
theorem can be proven by mechanical expansion and evaluation of the bracket.
\begin{theorem}
 Let $I^{\widetilde{\mk}}_r$ and $I^{\widetilde{\mk}}_s$ be given by the formula~(\ref{E:mkdvoddinte}). Then
 \[
 \{I^{\widetilde{\mk}}_r,I^{\widetilde{\mk}}_s\}_w=0.
 \]
\end{theorem}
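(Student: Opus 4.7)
The plan is to imitate the strategy used in Theorem \ref{T:sgeven} and the preceding $d=2n+1$ mKdV case: reduce everything to the $f$-bracket via a Poisson map, expand in the parameters, and show every coefficient vanishes. Concretely, I would start from the rewritten form (\ref{E:mkdvoddinte1}) of $I^{\widetilde{\mk}}_r$, in which the integral is expressed purely in terms of $W^{\pm 1}$ and Theta multi-sums $\Theta^{1,2n}_{s,\epsilon}[f_i]$ with $f=K_{2n}(w)$. By Lemma \ref{PBR3}, for any pair of such Thetas the $w$-bracket equals the $f$-bracket, so Lemmas \ref{L:PoiThetaeven} and \ref{L:ThetaPoieven2} and Corollary \ref{C:sGeven} are directly available. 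Likewise, the identity $F\circ K_{2n}=W$, combined with the degree-counting formula (\ref{PFT}) pulled through $K_{2n}$, supplies the bracket $\{W^{\pm 1},\Theta^{1,2n}_{r,\epsilon}\}_w$ quoted just before the theorem statement.

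Next I would write
\begin{equation*}
\{I^{\widetilde{\mk}}_r,I^{\widetilde{\mk}}_s\}_w=\beta_1^2Q_1+\beta_2^2Q_2+\beta_3^2Q_3+\beta_1\beta_2Q_{12}+\beta_1\beta_3Q_{13}+\beta_2\beta_3Q_{23},
\end{equation*}
where each $Q_I$ is, as in Section \ref{SS:sGeven}, a sum of brackets of pieces of (\ref{E:mkdvoddinte1}). The pure coefficients $Q_2,Q_3,Q_{23}$ involve only brackets of the form $\{C,C'\}_w$, $\{C,D'\}_w$, etc., where $C=\Theta^{1,2n}_{2r,1}+\Theta^{1,2n}_{2r-1,0}$ and $D=\Theta^{1,2n}_{2r,0}+\Theta^{1,2n}_{2r-1,1}$. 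These split into four brackets of Thetas whose first indices differ in parity, and they vanish precisely as in Theorem \ref{T:sgeven}: the same-$\epsilon$ same-parity brackets drop out by Lemma \ref{L:PoiThetaeven}, and the mixed-$\epsilon$ cross terms cancel pairwise by (\ref{E:sGevenco1}).

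The mixed coefficients $Q_{12},Q_{13}$ and especially the quadratic one $Q_1$ are the main technical burden, since now $A=\Theta^{1,2n}_{2r-1,0}+\Theta^{1,2n}_{2r-2,1}$ and $B=\Theta^{1,2n}_{2r-1,1}+\Theta^{1,2n}_{2r-2,0}$ each contribute two summands, so each bracket expands into many terms and an additional layer of $W,W^{-1}$ factors must be tracked. My plan is to group the resulting terms in two batches: (i) brackets of Thetas with each other, which I handle by splitting according to the parity of first indices and invoking Corollary \ref{C:sGeven} (in particular (\ref{E:sGevenco2})--(\ref{E:sGevenco3}) for the $(2r-1,2s-1)$ and $(2r,2s-1)$ type pairings arising here), and (ii) brackets involving $W^{\pm 1}$, which by the analogue of (\ref{PFT}) are nonzero only on the odd-index summands of $A,B$ and produce expressions linear in $W^{\pm 1}$ times products of two Thetas. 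The key point is that the nonvanishing residues from batch (i) (which appear precisely for odd-odd pairings in (\ref{E:sGevenco2})) are exactly matched and cancelled by the residues from batch (ii), just as in the computation of $A_{12}$ in the proof of Theorem \ref{T:sgeven}.

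The hardest step will be batch (i) for $Q_1$: here one must verify that the eight cross-products of the form $\Theta^{1,2n}_{2r-\ast,\epsilon}\Theta^{1,2n}_{2s-\ast,\delta}$ that survive, after using (\ref{E:sGevenco2}) and (\ref{E:sGevenco3}), reassemble into exactly the negative of what $W^{\pm 2}$ batch-(ii) produces. This is a bookkeeping exercise but not a routine one, because the parity-shifted summands in $A$ and $B$ interchange the roles played by even and odd first-index Thetas relative to the sine-Gordon case. Once this cancellation is verified for $Q_1$, the same mechanism dispatches $Q_{12}$ and $Q_{13}$, and the theorem follows.
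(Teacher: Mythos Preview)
Your proposal is correct and follows exactly the approach the paper takes: the paper's own proof is literally the one-line remark that the theorem ``can be proven by mechanical expansion and evaluation of the bracket'' once Lemma~\ref{PBR3} and the $\{W^{\pm1},\Theta\}_w$ formula are in hand, and your plan is precisely that expansion spelled out.

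One small correction to your bookkeeping: in $Q_2$ (and analogously $Q_3$) the mixed-$\epsilon$ cross terms are $\{\Theta^{1,2n}_{2r,1},\Theta^{1,2n}_{2s-1,0}\}_w+\{\Theta^{1,2n}_{2r-1,0},\Theta^{1,2n}_{2s,1}\}_w$, whose first indices have \emph{opposite} parity, so (\ref{E:sGevenco1}) does not apply; it is (\ref{E:sGevenco3}) with $r,s$ both even and $\epsilon=1$ that kills this sum. Likewise $Q_{23}$ needs both (\ref{E:sGevenco1}) and (\ref{E:sGevenco2}), not (\ref{E:sGevenco1}) alone. These are minor misattributions within Corollary~\ref{C:sGeven} and do not affect the validity of your argument.
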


\section{Involutivity of  pKdV integrals\label{S:pKinvo}}
In this section, we prove the involutivity of the integrals of order-reduced pKdV maps.
Similar to the sine-Gordon map, we consider two cases where the dimension $d$ of the map (\ref{E:pk}) is even or odd.
Here, in both cases there are not enough integrals for integrability, and therefore we perform reductions.
We present symplectic structures for the reduced maps in both cases and give a relationship between these symplectic structures.
For the case where $d$ is even, properties of multi-sums of products, $\Psi$, with respect to its symplectic structure
are proved in Appendix B. For the  case where $d$ is odd, the Poisson bracket between $\Psi$ multi-sums are derived from those
in the even case and the relationship between the two symplectic structures.
 \subsection{The case $d=2n+2$ \label{SS:pkodd}\label{SS:pkdveven}}
 We have  a $(2n+2)$-dimensional map~(\ref{E:map}). The integrals $I_r$ of this map
 are given by (\ref{E:pkinter}) with $0\leq r\leq n-1$ which are not enough integrals for integrability in
 the sense of Liouville-Arnold. Therefore, we use a reduction $c_i=v_{i}-v_{i+2}$ to reduce the dimension
 of the map by 2. From equation (\ref{E:pk}), we obtain the following map:
 \begin{equation}
 \label{E:pkoddmap}
 \pk: \ (c_1,c_2,\ldots,c_{2n})\mapsto (c_2,c_3,\ldots, c_{2n},\frac{\gamma}{c_2+c_4+\ldots + c_{2n}}-c_1-c_3-\ldots -c_{2n-1}).
 \end{equation}
 This map has exactly $n$ integrals given by
 \begin{align}
I^{\pk}_r=&\Psi^{1,2n-1}_{r-1}-(c_2+c_4+\ldots +c_{2n})\Psi^{1,2n-2}_{r-1}-(c_1+c_3+\ldots+c_{2n-1})\Psi^{2,2n-1}_{r-1}\notag\\
&+\Psi^{2,2n-2}_{r-2}+\left((c_1+c_3+\ldots +c_{2n-1})(c_2+c_4+\ldots +c_{2n})-\gamma \right)\Psi^{1,2n-1}_r, \label{E:pkoddinte}                                              \end{align}
with $r=0,1,\ldots,n-1$. The map is symplectic, we have $d\pk \cdot \Omega^{\pk}_{2n} \cdot d\pk^T = \Omega^{\pk}_{2n} \circ \pk$,
where
\begin{equation}
\label{E:pkoddsym}
 \Omega^{\pk}_{p}=
 \left(
 \begin{matrix}
 0&1&0&0\ldots& 0\\
 -1&0&1&0\ldots &0\\
 0&-1&0&1\ldots &0\\
 \vdots&\vdots&\vdots&\vdots&\vdots\\
 0&0&\ldots& 0&1\\
 0&0&\ldots &-1&0
 \end{matrix}
 \right) ,
 \end{equation}
which is given in \cite{Iatrou2003Higher-Dimension,RQs}. The corresponding Poisson bracket is denoted
$\{g,h\}_c =\nabla_c (g) \Omega_{2n}^{\pk} \left(\nabla_c (h)\right)^{T}$. We prove that the integrals of the map $\pk$ are in
involution with respect to this Poisson bracket. The proof is based on knowledge of the Poisson brackets between
two $\Psi$ multi-sums which is given as follows.

\begin{lemma}
\label{L:pkodd}
Let $p\geq 1$ and $0\leq r,s\leq \lfloor (p+1)/2 \rfloor$. Then  we have the  following identities
\begin{align}
\{\Psi^{1,p}_r,\Psi^{1,p}_s\}_c&=0,\label{E:PsiPoi1}\\
\{\Psi^{1,p}_r,\Psi^{1,p-1}_s\}_c+\{\Psi^{1,p-1}_r,\Psi^{1,p}_s\}_c&=0\label{E:PsiPoi2}.
\end{align}
\end{lemma}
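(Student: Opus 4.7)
The approach is a simultaneous induction on $p$ for the two identities. The engine of the induction is the recurrence
\[
\Psi^{1,p}_r \;=\; c_{p+1}\,\Psi^{1,p-1}_r + \Psi^{1,p-2}_{r-1},
\]
obtained by splitting the underlying multi-sum according to whether its largest index $i_r$ equals $p$ (in which case $i_{r-1}\le p-2$) or is at most $p-1$, and using $f_p\,c_p c_{p+1}=1$ to absorb the factor $f_p$ from the $i_r=p$ piece into the product of $c_i$'s. An analogous recurrence peels off the first index of $\Psi^{1,p}_r$. Together, these express $\Psi^{1,p}_r$ in terms of $\Psi$'s involving strictly fewer variables, reducing the level-$p$ bracket to brackets at levels $p-1$ and $p-2$ via the Leibniz rule.

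The second ingredient is an explicit formula for $\{c_{p+1},\,\cdot\,\}_c$: since $\Omega^{\pk}_p$ is constant tridiagonal, the bracket reads $\{g,h\}_c=\sum_i(\partial_ig\,\partial_{i+1}h-\partial_{i+1}g\,\partial_ih)$ and hence $\{c_{p+1},g\}_c=\partial_{c_{p+2}}g-\partial_{c_p}g$. A quick differentiation of the defining sum, or equivalently of the recurrence, yields $\partial_{c_p}\Psi^{1,p-1}_r=\Psi^{1,p-2}_r$, while $\Psi^{1,p-2}_r$ is independent of $c_p$. This gives the clean boundary identities
\[
\{c_{p+1},\Psi^{1,p-1}_r\}_c=-\Psi^{1,p-2}_r,\qquad \{c_{p+1},\Psi^{1,p-2}_r\}_c=0,
\]
which are exactly what is needed to close the induction.

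For the inductive step, substituting the recurrence into both arguments of $\{\Psi^{1,p}_r,\Psi^{1,p}_s\}_c$ and expanding by Leibniz produces contributions of degree $0$, $1$, and $2$ in $c_{p+1}$, plus boundary pieces. The degree $0$ and degree $2$ contributions vanish by the inductive instance of (\ref{E:PsiPoi1}); the linear-in-$c_{p+1}$ contributions cancel by the inductive instance of (\ref{E:PsiPoi2}); and the remaining boundary contributions, once rewritten via the displayed identities above, collapse to an expression that again vanishes by the inductive hypothesis. An entirely parallel expansion, using the recurrence on $\Psi^{1,p}_s$ on one side and on $\Psi^{1,p-1}_r$ on the other, handles (\ref{E:PsiPoi2}). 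The base case $p=1$ is immediate since the relevant $\Psi$'s are low-degree monomials in the $c_i$'s and the brackets are trivial by antisymmetry or direct inspection. The main obstacle is not any one computation but the coupling of the two identities: neither is closed under the recurrence alone, so both must be carried together through the induction, and the boundary terms generated by $\{c_{p+1},\,\cdot\,\}_c$ require careful combinatorial bookkeeping when collecting terms of each $c_{p+1}$-degree.
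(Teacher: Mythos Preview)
Your approach is essentially the paper's: a simultaneous induction on $p$ for \eqref{E:PsiPoi1} and \eqref{E:PsiPoi2}, driven by the recurrence $\Psi^{1,p}_r=c_{p+1}\Psi^{1,p-1}_r+\Psi^{1,p-2}_{r-1}$ and the boundary identity $\{c_{p+1},\Psi^{1,p-1}_r\}_c=-\Psi^{1,p-2}_r$.

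One point in your sketch is not quite right, though. After you expand and use \eqref{E:PsiPoi1} at levels $p-1$ and $p-2$, the surviving expression is
\[
c_{p+1}\Bigl(\Psi^{1,p-1}_s\Psi^{1,p-2}_r-\Psi^{1,p-1}_r\Psi^{1,p-2}_s
+\{\Psi^{1,p-1}_r,\Psi^{1,p-2}_{s-1}\}_c+\{\Psi^{1,p-2}_{r-1},\Psi^{1,p-1}_s\}_c\Bigr).
\]
The bracket combination here is \emph{not} an instance of \eqref{E:PsiPoi2}: in \eqref{E:PsiPoi2} the two lower indices on the $\Psi^{1,p-1}$ and $\Psi^{1,p-2}$ factors match, whereas above they are shifted (you get the pairs $(r,s-1)$ and $(r-1,s)$). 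So the linear-in-$c_{p+1}$ piece does not cancel by the inductive \eqref{E:PsiPoi2} alone, and the boundary terms do not vanish separately either. The paper deals with this by observing that both \eqref{E:PsiPoi1} and \eqref{E:PsiPoi2} at level $p+1$ reduce to the single auxiliary identity
\[
T:=\{\Psi^{1,p-2}_{r-1},\Psi^{1,p-1}_s\}_c+\{\Psi^{1,p-1}_r,\Psi^{1,p-2}_{s-1}\}_c
-\Psi^{1,p-1}_r\Psi^{1,p-2}_{s-1}+\Psi^{1,p-1}_s\Psi^{1,p-2}_{r-1}=0,
\]
and then proves $T=0$ by applying the recurrence \emph{once more} (peeling off $c_{p}$ from the $\Psi^{1,p-1}$ factors) and invoking the inductive hypotheses at levels $p-2$ and $p-1$. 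Your ``careful combinatorial bookkeeping'' line hints at this, but your explicit attribution of the cancellation to \eqref{E:PsiPoi2} is incorrect; one further expansion step is genuinely required. With that correction your argument coincides with the paper's.
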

  A corollary~\ref{C:pkodd} derives from this Lemma is given in Appendix B.

\begin{theorem}
\label{T:pkodd}
For all  $0\leq r,s \leq n-1$, we have $\{I_r,I_s\}_c=0$, where $I_r, I_s $ are given by~(\ref{E:pkoddinte}).
\end{theorem}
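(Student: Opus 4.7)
The plan is to mimic the strategy of Theorems~\ref{T:sgeven} and~\ref{T:sGodd}: expand $\{I^{\pk}_r, I^{\pk}_s\}_c$ as a polynomial in the parameter $\gamma$ and show each coefficient vanishes using the $\Psi$-bracket identities of Lemma~\ref{L:pkodd} and its corollary. Introducing the shorthand
\[
U := c_1 + c_3 + \cdots + c_{2n-1}, \qquad V := c_2 + c_4 + \cdots + c_{2n},
\]
the integral (\ref{E:pkoddinte}) is linear in $\gamma$, namely $I^{\pk}_r = A_r - \gamma\,\Psi^{1,2n-1}_r$, where
\[
A_r := \Psi^{1,2n-1}_{r-1} - V\Psi^{1,2n-2}_{r-1} - U\Psi^{2,2n-1}_{r-1} + \Psi^{2,2n-2}_{r-2} + UV\,\Psi^{1,2n-1}_r
\]
collects the $\gamma$-independent terms. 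Bilinearity of the bracket then yields the quadratic
\[
\{I^{\pk}_r, I^{\pk}_s\}_c = \{A_r, A_s\}_c - \gamma\bigl(\{A_r,\Psi^{1,2n-1}_s\}_c + \{\Psi^{1,2n-1}_r, A_s\}_c\bigr) + \gamma^2 \{\Psi^{1,2n-1}_r,\Psi^{1,2n-1}_s\}_c,
\]
and three coefficients must be shown to vanish.

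The $\gamma^2$ coefficient dies immediately by (\ref{E:PsiPoi1}). For the $\gamma^1$ coefficient I would apply the Leibniz rule and sort the result into two types of contributions. The first consists of symmetrised pairs of the form $\{\Psi^{a,b}_\bullet,\Psi^{a,b'}_\bullet\}_c + \{\Psi^{a,b'}_\bullet,\Psi^{a,b}_\bullet\}_c$ (and their $[2,\cdot]$-range analogues), on which (\ref{E:PsiPoi2}) supplies the cancellation. The second consists of elementary brackets $\{U,\Psi^{a,b}_t\}_c$ and $\{V,\Psi^{a,b}_t\}_c$; because $\Omega^{\pk}_{2n}$ is the constant nearest-neighbour form (\ref{E:pkoddsym}), each such bracket reduces to a boundary contribution depending only on the endpoints of $[a,b]$, and the corresponding left- and right-endpoint contributions pair up and cancel across the two summands.

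The main obstacle is $\{A_r,A_s\}_c$, which expands into twenty-five cross-pairings of the five summands of $A_r$ against those of $A_s$. The strategy is to organise these pairings by their total weight in $U$ and $V$ (weights $1$, $U$, $V$, $UV$, $U^2V$, $UV^2$, $U^2V^2$) and collapse each homogeneous group separately. Within every weight, the surviving $\Psi$-brackets will appear in the symmetric combination $\{X_r,Y_s\}_c + \{Y_r,X_s\}_c$ to which Corollary~\ref{C:pkodd} applies, while the auxiliary boundary terms from brackets involving $U$ or $V$ telescope as in the $\gamma^1$ analysis. I expect the most delicate bookkeeping to occur in the $UV$-weighted group, where the $\Psi^{2,2n-2}_{r-2}$ tail and the $UV\,\Psi^{1,2n-1}_r$ head contribute four distinct range-shifted brackets that must be made to align simultaneously.
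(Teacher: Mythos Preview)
Your plan to expand in powers of $\gamma$ is valid but differs from the paper's organisation. The paper does \emph{not} separate out $\gamma$; instead it treats $C_1C_2-\gamma$ (your $UV-\gamma$) as a single scalar factor and partitions $\{I_r,I_s\}_c$ into eleven pieces $A_1,\ldots,A_{11}$: the first six collect all pairings among the four summands $\Psi^{1,2n-1}_{r-1}$, $-C_2\Psi^{1,2n-2}_{r-1}$, $-C_1\Psi^{2,2n-1}_{r-1}$, $\Psi^{2,2n-2}_{r-2}$ and are shown to cancel as a block; $A_7$--$A_{11}$ collect every pairing involving the fifth summand $(C_1C_2-\gamma)\Psi^{1,2n-1}_r$ and cancel as a second block, with $(C_1C_2-\gamma)$ surviving only as an undisturbed overall factor. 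The advantage over your $\gamma$-expansion is that the $UV$ part of the fifth summand never has to be recombined with the first four, so the extra bookkeeping you anticipate in the ``$UV$-weighted group'' of $\{A_r,A_s\}_c$ simply does not arise.

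There is also a concrete slip in your $\gamma^1$ analysis. The symmetrised pairs that actually appear there, e.g.\ $\{\Psi^{1,2n-1}_r,\Psi^{1,2n-2}_{s-1}\}_c+\{\Psi^{1,2n-2}_{r-1},\Psi^{1,2n-1}_s\}_c$, have \emph{shifted} subscripts as well as swapped ranges, so they do not match (\ref{E:PsiPoi2}), which requires the same subscript pair $(r,s)$ in both brackets. They match instead items (1), (3) and (9) of Corollary~\ref{C:pkodd}, whose right-hand sides are non-zero $\Psi$-products; it is these residual products that must cancel against your boundary terms $\{U,\Psi\}_c$, $\{V,\Psi\}_c$ (computed via (\ref{E:pkpro}) as single partial derivatives $\partial/\partial c_{2n}$ and $\partial/\partial c_1$). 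The cancellation does occur, but it is not as immediate as you suggest, and this interplay---the corollary producing $\Psi$-products that are then absorbed by the $C_1$, $C_2$ derivatives---is exactly what drives the paper's computation in both blocks.
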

\begin{proof}
To prove this theorem we need the following formulas. Let $g(c_1,c_2,\ldots, c_{2n})$
 be a differentiable function on $\R^{2n}$.  Denote
 $$
 C_1=c_1+c_3+\cdots+c_{2n-1},\
 C_2=c_2+c_4+\cdots+c_{2n},
 $$
 we have
\begin{equation}
\label{E:pkpro}
\{g,C_1\}_c=-\frac{\partial g}{\partial c_{2n}},\
\{g,C_2\}_c=\frac{\partial g}{\partial c_1}.
\end{equation}
In addition, since we have
\[
\Psi^{a,b}_r=c_{b+1}\Psi^{a,b}_r+\Psi^{a,b-1}_{r-1}\ \mbox{and}\ \Psi^{a,b}_{r}=c_a\Psi^{a+1,b}_r+\Psi^{a+2,b}_{r-1},
\]
we obtain
\[
\displaystyle{\frac{\partial \Psi^{a,b+1}_r}{\partial c_{b+2}}=\Psi^{a,b}_r}\ \mbox{and}\
\frac{\partial \Psi^{a,b+1}_r}{\partial c_{a}}=\Psi^{a+1,b}_{r}.
\]
Now we write $\{I_r,I_s\}_c=A_1+A_2+A_3+A_4+A_5+A_6+A_7+A_8+A_9+A_{10}+A_{11}$, where
\begin{align*}
A_1:&=\{\Psi^{1,2n-1}_{r-1}-C_2\Psi^{1,2n-2}_{r-1},\Psi^{1,2n-1}_{s-1}-C_2\Psi^{1,2n-2}_{s-1}\}_c,\\
A_2:&=-\{\Psi^{1,2n-1}_{r-1},C_1\Psi^{2,2n-1}_{s-1}\}_c-\{C_1\Psi^{2,2n-1}_{r-1},\Psi^{1,2n-1}_{s-1}\}_c
+\{C_1\Psi^{2,2n-1}_{r-1},C_1\Psi^{2,2n-1}_{s-1}\}_c,\\
A_3:&=\{\Psi^{1,2n-1}_{r-1},\Psi^{2,2n-2}_{s-2}\}_c+\{\Psi^{2,2n-2}_{r-2},\Psi^{1,2n-1}_{s-1}\}_c+
\{\Psi^{2,2n-2}_{r-2},\Psi^{2,2n-2}_{s-2}\}_c,\\
A_4:&=\{C_2\Psi^{1,2n-2}_{r-1},C_1\Psi^{2,2n-1}_{s-1}\}_c + \{C_1\Psi^{2,2n-1}_{r-1},C_2\Psi^{1,2n-2}_{s-1}\}_c,\\
A_5:&=-\{C_2\Psi^{1,2n-2}_{r-1},\Psi^{2,2n-2}_{s-2}\}_c-\{\Psi^{2,2n-2}_{r-2},C_2\Psi^{1,2n-2}_{s-1}\}_c,\\
A_6:&=-\{C_1\Psi^{2,2n-1}_{r-1},\Psi^{2,2n-2}_{s-2}\}_c-\{\Psi^{2,2n-2}_{r-2},C_1\Psi^{2,2n-1}_{s-1}\}_c,\\
A_7:&=-\{\Psi^{1,2n-1}_{r-1},(C_1C_2-\gamma)\Psi^{1,2n-1}_s\}_c - \{(C_1C_2-\gamma)\Psi^{1,2n-1}_r,\Psi^{1,2n-1}_{s-1}\}_c,\\
A_8:&=-\{C_2\Psi^{1,2n-2}_{r-1},(C_1C_2-\gamma)\Psi^{1,2n-1}_s\}_c-\{(C_1C_2-\gamma)\Psi^{1,2n-1}_r,C_2\Psi^{1,2n-2}_{s-1}\}_c, \\
A_9:&=-\{C_1\Psi^{2,2n-1}_{r-1},(C_1C_2-\gamma)\Psi^{1,2n-1}_s\}_c- \{(C_1C_2-\gamma)\Psi^{1,2n-1}_r,C_1\Psi^{2,2n-1}_{s-1}\}_c,\\
A_{10}&:=\{\Psi^{2,2n-2}_{r-2},(C_1C_2-\gamma)\Psi^{1,2n-1}_s\}_c
 + \{(C_1C_2-\gamma)\Psi^{1,2n-1}_r,\Psi^{2,2n-2}_{s-2}\}_c,\\
A_{11}&:=\{(C_1C_2-\gamma)\Psi^{1,2n-1}_r,(C_1C_2-\gamma)\Psi^{1,2n-1}_s\}_c.
\end{align*}

Using Lemma \ref{L:pkodd}, Corollary \ref{C:pkodd} and formulas (\ref{E:pkpro}), we have
\begin{align*}
A_1&=\Psi^{1,2n-2}_{r-1}\Psi^{2,2n-1}_{s-1}-\Psi^{1,2n-2}_{s-1}\Psi^{2,2n-1}_{r-1}+
C_2\left(\Psi^{1,2n-2}_{s-1}\Psi^{2,2n-2}_{r-1}-\Psi^{1,2n-2}_{r-1}\Psi^{2,2n-2}_{s-1}\right),\\
A_2&=\Psi^{2,2n-1}_{s-1}\Psi^{1,2n-2}_{r-1}-\Psi^{2,2n-1}_{r-2}\Psi^{1,2n-2}_{s-1}+
C_1\left(\Psi^{2,2n-1}_{r-1}\Psi^{2,2n-2}_{s-1} - \Psi^{2,2n-1}_{s-1}\Psi^{2,2n-2}_{r-1}\right),\\
A_3&=\Psi^{2,2n-1}_{r-1}\Psi^{1,2n-2}_{s-1}-\Psi^{2,2n-1}_{s-1}\Psi^{1,2n-2}_{r-1},\\
A_4&=-\Psi^{1,2n-2}_{r-1}\Psi^{2,2n-1}_{s-1}+\Psi^{2,2n-1}_{r-1}\Psi^{1,2n-2}_{s-1},\\
A_5&=-C_2\left(\Psi^{1,2n-2}_{s-1}\Psi^{2,2n-2}_{r-1}-\Psi^{1,2n-2}_{r-1}\Psi^{2,2n-2}_{s-1}\right),\\
A_6&=-C_1\left(\Psi^{2,2n-1}_{r-1}\Psi^{2,2n-2}_{s-1}-\Psi^{2,2n-1}_{s-1}\Psi^{2,2n-2}_{r-1} \right).
\end{align*}
It follows that $A_1+A_2+A_3+A_4+A_5+A_6=0$. Now we show that $A_7+A_8+A_9+A_{10}+A_{11}=0$.
We also have
\begin{align*}
A_7&=C_1\left(\Psi^{1,2n-1}_s\Psi^{2,2n-1}_{r-1}-\Psi^{1,2n-1}_r\Psi^{2,2n-1}_{s-1}\right)
 +C_2\left(\Psi^{1,2n-1}_r\Psi^{1,2n-2}_{s-1}-\Psi^{1,2n-1}_s\Psi^{1,2n-2}_{r-1}\right),\\
A_8&=C_2(C_1C_2-\gamma)\left(\Psi^{1,2n-1}_s\Psi^{1,2n-2}_r-\Psi^{1,2n-1}_r\Psi^{1,2n-2}_s\right)+C_2
\left(\Psi^{1,2n-2}_{r-1}\Psi^{1,2n-1}_s-\Psi^{1,2n-2}_{s-1}\Psi^{1,2n-1}_r\right)\\
&\quad  +C_1C_2\left(\Psi^{1,2n-1}_r\Psi^{2,2n-2}_{s-1}-\Psi^{1,2n-1}_s\Psi^{2,2n-1}_{r-1}\right)+
 (C_1C_2-\gamma)\left(\Psi^{1,2n-2}_{r-1}\Psi^{2,2n-1}_s-\Psi^{1,2n-2}_{s-1}\Psi^{2,2n-1}_r\right),\\
A_9&=C_1(C_1C_2-\gamma)\left(\Psi^{1,2n-1}_r\Psi^{2,2n-1}_s-\Psi^{1,2n-1}_s\Psi^{2,2n-1}_r\right)+
C_1\left(\Psi^{2,2n-1}_{s-1}\Psi^{1,2n-1}_r-\Psi^{2,2n-1}_{r-1}\Psi^{1,2n-1}_s\right)\\
&\quad +C_1C_2\left(\Psi^{1,2n-1}_s\Psi^{2,2n-2}_{r-1}-\Psi^{1,2n-1}_r\Psi^{2,2n-2}_{s-1}\right)
 + (C_1C_2-\gamma)\left(\Psi^{2,2n-1}_{s-1}\Psi^{1,2n-2}_r-\Psi^{2,2n-1}_{r-1}\Psi^{1,2n-2}_s\right),\\
A_{10}&=(C_1C_2-\gamma)\left(\Psi^{2,2n-1}_r\Psi^{1,2n-2}_{s-1}-\Psi^{2,2n-1}_s\Psi^{1,2n-2}_{r-1} +\Psi^{2,2n-1}_{r-1}\Psi^{1,2n-2}_s-\Psi^{2,2n-1}_{s-1}\Psi^{1,2n-2}_{r}\right),\\
A_{11}&=(C_1C_2-\gamma)
C_1\left(\Psi^{1,2n-1}_s\Psi^{2,2n-1}_r-\Psi^{1,2n-1}_r\Psi^{2,2n-1}_s\right)+
\left(\Psi^{1,2n-2}_s\Psi^{1,2n-1}_r-\Psi^{1,2n-1}_s\Psi^{1,2n-2}_r \right)\\
&\quad (C_1c_2-\gamma)C_2.
\end{align*}
This implies $A_7+A_8+A_9+A_{10}+A_{11}=0$. Therefore, we have $\{I_r,I_s\}=0$.
\end{proof}
\subsection{The case  $d=2n+1$}
We introduce a reduction $u_i=v_i-v_{i+1}$. We obtain a $2n$-dimensional map
\begin{equation}
\label{E:pkevenmap}
\widetilde{\pk}: \ (u_1,u_1,\ldots, u_{2n})\mapsto (u_2,u_3,\ldots,u_{2n},\frac{\gamma}{u_2+u_3+\ldots+u_{2n}}-u_1-u_2-\ldots-u_{2n})
\end{equation}
with  $n$ integrals ($0\leq r\leq n-1$)
 \begin{align}
I^{\widetilde{\pk}}_r&=\Psi^{1,2n-2}_{r-1}-(u_2+u_3+\ldots +u_{2n})\Psi^{1,2n-3}_{r-1}-(u_1+u_2+\ldots+u_{2n-1})\Psi^{2,2n-2}_{r-1}\notag\\
&+\Psi^{2,2n-3}_{r-2}+\left((u_2+u_3+\ldots +u_{2n})(u_1+u_2+\ldots +u_{2n-1})-\gamma \right)\Psi^{1,2n-2}_r\label{E:pkeveninte},                                              \end{align}
where the argument of $\Psi$  is  $f_i=1/(c_ic_{i+1})$ with $c_i:=u_{i}+u_{i+1}$.
Based on the method given in \cite{RQs}, we obtain a symplectic structure $\Omega^{\widetilde{\pk}}_{2n}$
for the map~(\ref{E:pkevenmap}), where
 \begin{equation}
 \label{E:pkevensym}
 \Omega^{\widetilde{\pk}}_ p=
 \left(
 \begin{matrix}
 0&1&-1&1\ldots& (-1)^{p}\\
 -1&0&1&-1\ldots &-1^{p-1}\\
 1&-1&0&1\ldots &1\\
 \vdots&\vdots&\vdots&\vdots&\vdots\\
 (-1)^{p-1}&(-1)^{p-2}&\ldots&0&1\\
 (-1)^p&(-1)^{p-1}&\ldots &-1&0
 \end{matrix}
 \right).
 \end{equation}
The Poisson bracket is denoted $\{g,h\}_u =\nabla_u (g) \Omega_{2n}^{\widetilde{\pk}} \left(\nabla_u (h)\right)^{T}$.
Next we present a relationship between the two symplectic structures~(\ref{E:pkoddsym}) and~(\ref{E:pkevensym})
and the corresponding Poisson brackets. Consider the map
\[
Q_p: (u_1,u_2,\ldots, u_{p})\mapsto(u_1+u_2,u_2+u_3,\ldots,u_{p-1}+u_{p}).
\]
\begin{lemma}
The map $Q_p$ is a Poisson map, i.e.
\begin{equation}
\label{E:pkPsrela}
 \{f\circ Q_p,g\circ Q_p\}_u=\{f,g\}_{c=Q_p(u)},
\end{equation}
where $f(c)$ and $g(c)$ are differentiable functions.
\end{lemma}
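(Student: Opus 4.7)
The plan is to follow exactly the template used for the three previous Poisson-map lemmas (Lemma \ref{PBR1}, \ref{PBR2}, \ref{PBR3}): reduce the claim to a single matrix identity relating the two symplectic forms via the Jacobian of $Q_p$, then push differentials through with the chain rule. The statement is essentially a routine verification rather than a deep fact, but care is needed with signs because $\Omega^{\widetilde{\pk}}_p$ is a dense chequerboard matrix while $\Omega^{\pk}_{p-1}$ is tridiagonal.

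First I would write down the $(p-1)\times p$ Jacobian of $Q_p$, which is the banded matrix
\[
dQ_p=\begin{pmatrix}
1 & 1 & 0 & \cdots & 0 \\
0 & 1 & 1 & \cdots & 0 \\
\vdots & & \ddots & \ddots & \vdots \\
0 & 0 & \cdots & 1 & 1
\end{pmatrix},
\]
so that row $i$ has entries $1$ precisely in columns $i$ and $i+1$. Because $Q_p$ is affine, $dQ_p$ is constant; because both $\Omega^{\widetilde{\pk}}_p$ and $\Omega^{\pk}_{p-1}$ are constant matrices, no substitution $c=Q_p(u)$ is needed on either side. The core calculation is therefore the purely numerical identity
\[
dQ_p\cdot \Omega^{\widetilde{\pk}}_p\cdot dQ_p^{T}=\Omega^{\pk}_{p-1}.
\]
Writing $\omega_{kl}:=(\Omega^{\widetilde{\pk}}_p)_{kl}=(-1)^{l-k+1}\,\mathrm{sgn}(l-k)$, the sparsity of $dQ_p$ collapses the matrix product to
\[
(dQ_p\,\Omega^{\widetilde{\pk}}_p\,dQ_p^T)_{ij}=\omega_{i,j}+\omega_{i,j+1}+\omega_{i+1,j}+\omega_{i+1,j+1},
\]
and a short case split (diagonal $j=i$; superdiagonal $j=i+1$; and $|j-i|\geq 2$, where the four alternating signs telescope to zero) gives exactly the tridiagonal form $\Omega^{\pk}_{p-1}$.

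With this identity in hand, the rest follows verbatim from the argument for Lemma \ref{PBR1}. Applying $\nabla_u$ to $(f\circ Q_p)(u)=f(c)|_{c=Q_p(u)}$ gives $\nabla_u(f\circ Q_p)=\nabla_c(f)|_{c=Q_p(u)}\,dQ_p$, hence
\[
\{f\circ Q_p,g\circ Q_p\}_u
=\nabla_c(f)\,dQ_p\,\Omega^{\widetilde{\pk}}_p\,dQ_p^T\,(\nabla_c(g))^T\big|_{c=Q_p(u)}
=\{f,g\}_{c=Q_p(u)},
\]
as required. The only real obstacle is bookkeeping: getting the sign convention for $\omega_{kl}$ right (reading off the displayed matrix, one has $(\Omega^{\widetilde{\pk}}_p)_{1,j}=(-1)^j$ for $j\geq 2$) and handling the boundary rows/columns of $dQ_p$ correctly so that the four-term sum above is still valid for $i=1$ and $i=p-1$. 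Once the sign convention is fixed, the verification is mechanical.
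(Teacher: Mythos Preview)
Your proposal is correct and follows exactly the paper's approach: the paper's proof consists solely of the line ``By calculation we obtain $dQ_p\Omega^{\widetilde{\pk}}_{p}dQ_p^{T}=\Omega^{\pk}_{p-1}$'', with the chain-rule reduction to this matrix identity implicit from the template of Lemma~\ref{PBR1}. Your version simply spells out the entrywise verification that the paper leaves to the reader.
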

\begin{proof}
By calculation we obtain
\begin{equation}
\label{E:pksymrela}
dQ_p\Omega^{\widetilde{\pk}}_{p}dQ_p^{T}=\Omega^{\pk}_{p-1}.
\end{equation}
\end{proof}

 \begin{theorem}
 \label{T:pkeven}
Let $I_r, I_s$ be given by~(\ref{E:pkeveninte}). Then,  for all $0\leq r,s\leq n-1$ we have
\[
\{I_r,I_s\}_u=0.
\]
 \end{theorem}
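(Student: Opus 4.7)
The approach is to mirror the proof of Theorem~\ref{T:pkodd} as closely as possible. The integrals $I^{\widetilde{\pk}}_r$ in~(\ref{E:pkeveninte}) have the same algebraic shape as the $c$-integrals~(\ref{E:pkoddinte}) under the substitution $C_i\mapsto U_i$ and a shift of every upper $\Psi$-index down by one. It therefore suffices to verify the $u$-analogs of the three ingredients used in the earlier proof: the $\Psi$-$\Psi$ bracket identities, the mixed brackets~(\ref{E:pkpro}), and the derivative formulas for $\Psi$. The combinatorial cancellations then follow verbatim.

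Since every $\Psi^{a,b}_r$ appearing in $I^{\widetilde{\pk}}_r$ depends on $u$ only through $c=Q_{2n}(u)$, the Poisson property~(\ref{E:pkPsrela}) gives
\[
\{\Psi^{a,b}_r,\Psi^{a',b'}_s\}_u = \{\Psi^{a,b}_r,\Psi^{a',b'}_s\}_c\mid_{c=Q_{2n}(u)},
\]
so Lemma~\ref{L:pkodd} and Corollary~\ref{C:pkodd} apply directly with the $u$-bracket. Moreover the derivative identities $\partial\Psi^{a,b+1}_r/\partial c_{b+2}=\Psi^{a,b}_r$ and $\partial\Psi^{a,b+1}_r/\partial c_a=\Psi^{a+1,b}_r$ are intrinsic to $\Psi$ and carry over unchanged.

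The only genuinely new computation is the analog of~(\ref{E:pkpro}). A short calculation with the alternating $\pm 1$ entries of $\Omega^{\widetilde{\pk}}_{2n}$ shows that the partial row-sum $\sum_{j=1}^{2n-1}\Omega^{\widetilde{\pk}}_{ij}$ vanishes for $i<2n$ and equals $-1$ at $i=2n$, while $\sum_{j=2}^{2n}\Omega^{\widetilde{\pk}}_{ij}$ vanishes for $i>1$ and equals $+1$ at $i=1$. Combined with the chain-rule identities $\partial g/\partial u_{2n}=\partial g/\partial c_{2n-1}$ and $\partial g/\partial u_1=\partial g/\partial c_1$ for any $g=g(c)$, this yields
\[
\{g,U_1\}_u = -\frac{\partial g}{\partial c_{2n-1}},\qquad \{g,U_2\}_u = \frac{\partial g}{\partial c_1},
\]
and $\{U_1,U_2\}_u=1$ (the latter also following directly from $U_1+U_2=c_1+c_2+\cdots+c_{2n-1}$ together with the second formula).

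With these ingredients in hand, I expand $\{I^{\widetilde{\pk}}_r,I^{\widetilde{\pk}}_s\}_u$ by the Leibniz rule into eleven terms $\tilde A_1,\ldots,\tilde A_{11}$, arranged exactly as in the proof of Theorem~\ref{T:pkodd} but with $U_i$ replacing $C_i$ and each maximal $c$-index $2n$ replaced by $2n-1$. The same algebraic manipulations give the two block cancellations $\tilde A_1+\cdots+\tilde A_6=0$ and $\tilde A_7+\cdots+\tilde A_{11}=0$. The main obstacle is the verification of the mixed-bracket formulas above for $\Omega^{\widetilde{\pk}}_{2n}$; once these are established, no new algebraic cancellation needs to be checked, and the remainder of the proof is a mechanical transposition of the earlier argument.
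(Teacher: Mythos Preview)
Your proposal is correct and follows essentially the same route as the paper: transfer the $\Psi$--$\Psi$ bracket identities via the Poisson map $Q_{2n}$, verify the mixed-bracket formulas for $U_1,U_2$ (the paper states these as $\{g,U_2\}_u=\partial g/\partial u_1$ and $\{g,U_1\}_u=-\partial g/\partial u_{2n}$, which agree with yours after the chain rule), and then repeat the eleven-term expansion of Theorem~\ref{T:pkodd} verbatim. Your version is in fact slightly more explicit than the paper's, which simply asserts that the proof carries over.
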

 \begin{proof}
 As the following formulas hold,
 \begin{align}
\{g,u_2+u_3+\ldots+u_{2n}\}_u&=\frac{\partial g}{\partial u_1},\label{E:pkpro4}\\
\{g,u_1+u_2+\ldots +u_{2n-1}\}_u&=-\frac{\partial g}{\partial u_{2n}},\label{E:pkpro5}
\end{align}
and the properties of Psi with respect to the bracket $\{,\}_u$ which are the same as those
with respect to the bracket $\{,\}_c$, one can prove the involutivity of the integrals~(\ref{E:pkeveninte})
similarly to what we did for the case $d=2n+2$.
 \end{proof}
 \section{Discussion\label{S:disc}}
 In this paper, we have proved the involutivity of  integrals of  sine-Gordon, pKdV and mKdV maps directly by
using induction and  using recently found symplectic structures of  these maps. In order to prove these maps are completely integrable in
the sense of Louville-Arnold \cite{Bruschi1991,Ves}, we also need to prove functional independence of their integrals which we hope to
publish  elsewhere \cite{Tranfun}.

It should be noted that the integrals of maps obtained as $(p,-1)$-reductions of the equations in the ABS list \cite{ABS}, with
the exception of $Q_4$, can be expressed in terms of multi-sums of products, $\Psi$ \cite{Tranclosedform}. Therefore, it would be
interesting to study their symplectic structures and furthermore their complete integrability.

\section*{Acknowledgment\label{S:Acknow}}
This research has been funded by the Australian Research Council through the Centre
of Excellence for Mathematics and Statistics of Complex Systems.
DTT acknowledges the support of two scholarships, one from La Trobe University
and the other from the Endeavour IPRS programme.
\appendix
 \section{Properties of $\Theta$ with respect to the Poisson brackets}
 In this Appendix, we prove Lemma~\ref{L:PoiThetaeven} and Lemma \ref{L:ThetaPoieven2}.
 First of all, the following lemma follows  from a property of the operator $E_f$ \eqref{E:Ef_operator}.
 \begin{lemma}
\label{L:PoiThetaeven1}
\begin{equation}
\label{E:PoiTheta1}
\{\t^{1,p}_{r,\epsilon},f_{p+1}^{(-1)^{\delta}}\}_f=
\left\{
\begin{array}{ll}
0&\ r \ \mbox{even}\\
(-1)^{\delta+\epsilon+1}f_{p+1}^{(-1)^{\delta}}\t^{1,p}_{r,\epsilon} & \ r \ \mbox
{odd}
\end{array}
\right.
\end{equation}
\end{lemma}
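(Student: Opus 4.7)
The plan is to exploit the fact that $f_{p+1}$ appears only as a single variable in the bracket, so the Poisson bracket collapses to the Euler scaling operator $E_f$ acting on $\Theta^{1,p}_{r,\epsilon}$.

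First I would observe that since $f_{p+1}^{(-1)^\delta}$ depends only on the single coordinate $f_{p+1}$, only the terms in the sum \eqref{E:sGevenPs} with $j=p+1$ survive, and since $\Theta^{1,p}_{r,\epsilon}$ does not involve $f_{p+1}$ (indices run up to $p$), we get
\[
\{\Theta^{1,p}_{r,\epsilon}, f_{p+1}^{(-1)^\delta}\}_f = \sum_{i=1}^{p} f_i f_{p+1}\, \frac{\partial \Theta^{1,p}_{r,\epsilon}}{\partial f_i}\cdot \frac{\partial f_{p+1}^{(-1)^\delta}}{\partial f_{p+1}} = (-1)^\delta f_{p+1}^{(-1)^\delta}\, E_f\!\left(\Theta^{1,p}_{r,\epsilon}\right),
\]
where $E_f$ is the operator defined in \eqref{E:Ef_operator}. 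So everything reduces to computing the total degree of $\Theta^{1,p}_{r,\epsilon}$.

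Next I would compute this degree directly from the definition \eqref{D:Theta}. Each monomial in $\Theta^{1,p}_{r,\epsilon}$ is $\prod_{j=1}^r (f_{i_j})^{(-1)^{j+\epsilon}}$, and all terms share the same total degree
\[
\deg = \sum_{j=1}^{r} (-1)^{j+\epsilon}.
\]
For $r$ even the signs pair off and this sum vanishes, so $E_f \Theta^{1,p}_{r,\epsilon}=0$, giving the first case. For $r$ odd the sum telescopes to $(-1)^{r+\epsilon}=(-1)^{\epsilon+1}$, so $E_f \Theta^{1,p}_{r,\epsilon}=(-1)^{\epsilon+1}\Theta^{1,p}_{r,\epsilon}$. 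Substituting back yields the prefactor $(-1)^{\delta}(-1)^{\epsilon+1} = (-1)^{\delta+\epsilon+1}$ that appears on the right-hand side of \eqref{E:PoiTheta1}.

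No real obstacle is expected: the argument is essentially a one-line computation once one notices that the bracket against a function of a single variable is proportional to $E_f$, combined with the observation that $\Theta^{1,p}_{r,\epsilon}$ is homogeneous of a degree that depends only on the parity of $r$. The only thing to be careful about is the bookkeeping of signs in the parity argument, which I would double-check by evaluating the small cases $r=1$ and $r=2$ against the general formula before writing the final statement.
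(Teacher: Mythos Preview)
Your proposal is correct and follows essentially the same approach as the paper: the paper's proof consists of the single line $\{\Theta^{1,p}_{r,\epsilon},f_{p+1}^{(-1)^{\delta}}\}_f=(-1)^{\delta}f_{p+1}^{(-1)^{\delta}}E_f\Theta^{1,p}_{r,\epsilon}$, relying on the homogeneity degree of $\Theta^{1,p}_{r,\epsilon}$ already recorded near~\eqref{PFT}. Your write-up simply unpacks these two ingredients explicitly.
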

\begin{proof}
It is because
\[
\{\t^{1,p}_{r,\epsilon},f_{p+1}^{(-1)^{\delta}}\}_f=(-1)^{\delta}f_{p+1}^{(-1)^{\delta}}E_f\t^{1,p}_{r,\epsilon}.
\]
\end{proof}

\begin{remark}
\label{R:PoiTransform}
If we introduce $t_i=1/f_i$, then we get
\[
\t^{1,p}_{r,\epsilon}[f_i]=\t^{1,p}_{r,\epsilon+1}[t_i].
\]
We have
\begin{align*}
\{\t^{1,p}_{r,\epsilon},\t^{1,p}_{s,\delta}\}_f&=\sum_{i<j}
\left(\frac{\partial \t^{1,p}_{r,\epsilon}}{\partial f_i }\frac{\partial \t^{1,p}_{s,\delta}}{\partial f_j }
-\frac{\partial \t^{1,p}_{r,\epsilon}}{\partial f_j }\frac{\partial \t^{1,p}_{s,\delta}}{\partial f_i }\right)f_if_j\\
&=\sum_{i<j}\left(\frac{\partial \t^{1,p}_{r,\epsilon+1}}{\partial t_i }\frac{\partial \t^{1,p}_{s,\delta+1}}{\partial t_j }t_i^2t_j^2
-\frac{\partial \t^{0,p}_{r,\epsilon+1}}{\partial t_j }\frac{\partial \t^{1,p}_{s,\delta+1}}{\partial t_i }t_i^2t_j^2\right)\frac{1}{t_it_j}\mid_{t=T(f)}\\
&=\{\t^{1,p}_{r,\epsilon+1},\t^{1,p}_{s,\delta+1}\}_{f=T(f)},
\end{align*}
where $T$ is defined as follows
\[
T:\left(f_1,f_2,\ldots,f_p\right)\mapsto\left(\frac{1}{f_1},\frac{1}{f_2},\ldots, \frac{1}{f_p}\right).
\]
\end{remark}

\subsection{Proof of  Lemma~\ref{L:PoiThetaeven}}
\begin{proof}
We will prove this lemma by induction.
The following properties, given in \cite{KRQ, Tranclosedform} will be used in our proof:
\begin{align}
\t^{a,b}_{r,\epsilon}&=\t^{a,b-1}_{r,\epsilon}+f_b^{(-1)^{\epsilon +r}}\t^{a,b-1}_{r-1,\epsilon},\label{E:recur1}\\
\t^{a,b}_{r,\epsilon}&=\t^{a+1,b}_{r,\epsilon}+f_a^{(-1)^{\epsilon\pm 1}}\t^{a+1,b}_{r-1,\epsilon\pm 1}.\label{E:recur2}
\end{align}

Using Remark~\ref{R:PoiTransform}, it is sufficient to prove for the case  $\epsilon=1$. One verifies that~(\ref{E:PoiTheta}) holds for $p=1,2$ and for all $1\leq r,s\leq p$.
Suppose that ~(\ref{E:PoiTheta}) holds for $p-1$ and $p$ ($p\geq 2$). We will prove
that~(\ref{E:PoiTheta}) holds for $p+1$.

Using identity~(\ref{E:recur1}), we expand the left hand side of~(\ref{E:PoiTheta}) and we obtain
\begin{align}
\label{E:sgPoieven1}
&\{\t^{1,p+1}_{r,1},\t^{1,p+1}_{s,1}\}_f\notag \\
=&\{\t^{1,p}_{r,1},\t^{1,p}_{s,1}\}_f+f_{p+1}^{(-1)^{r+1}}\{\t^{1,p}_{r-1,1},\t^{1,p}_{s,1}\}_f+
\t^{1,p}_{r-1,1}\{f_{p+1}^{(-1)^{p+1}},\t^{1,p}_{s,1}\}_f
 +f_{p+1}^{(-1)^{s+1}}\{\t^{1,p}_{r,1},\t^{1,p}_{s-1,1}\}_f\notag\\
& +\t^{1,p}_{s-1,1}\{\t^{1,p}_{r,1},f_{p+1}^{(-1)^{s+1}}\}_f
+f_{p+1}^{(-1)^{r+1}+(-1)^{s+1}}\{\t^{1,p}_{r-1,1},\t^{1,p}_{s-1,1}\}_f+
\t^{1,p}_{r-1,1}\t^{1,p}_{s-1,1}\{f_{p+1}^{(-1)^{r+1}},f_{p+1}^{(-1)^{s+1}}\}_f\notag\\
& +f_{p+1}^{(-1)^{s+1}}\t^{1,p}_{r-1,1}\{f_{p+1}^{(-1)^{r+1}},\t^{1,p}_{s-1,1}\}_f
+f_{p+1}^{(-1)^{r+1}}\t^{1,p}_{s-1,1}\{\t^{1,p}_{r-1,1},f_{p+1}^{(-1)^{s+1}}\}_f.
\end{align}
The case $r=s$ is trivial. Now we distinguish 3 cases.
\begin{enumerate}
\item  $r$ and $s$ are both even or both odd.
Since $ \{\t^{1,p+1}_{r,1},\t^{1,p+1}_{s,1}\}_f=-\{\t^{1,p+1}_{s,1},\t^{1,p+1}_{r,1}\}_f$, without  loss
of generality we assume that $r>s$.\\
If both $r$ and $s$ are even, on the right hand side of~(\ref{E:sgPoieven1}) the first, third, fifth, sixth,
seventh terms vanish. Thus, we have
\begin{align*}
\{\t^{1,p+1}_{r,1},\t^{1,p+1}_{s,1}\}_f=&f_{p+1}^{-1}\left(\{\t^{1,p}_{r-1,1},\t^{1,p}_{s,1}\}_f+\{\t^{1,p}_{r,1},\t^{1,p}_{s-1,1}\}_f \right)
+ f_{p+1}^{-1}\t^{1,p}_{r-1,1}\{f_{p+1}^{-1},\t^{1,p}_{s-1,1}\}_f\\
&
+ f_{p+1}^{-1}\t^{1,p}_{s-1,1}\{\t^{1,p}_{r-1,1},f_{p+1}^{-1}\}_f\\
=&f_{p+1}^{-1}\left(\sum_{i\geq 1}(-1)^{i}\t^{1,p}_{s-i,1}\t^{1,p}_{r+i-1,1}
+\sum_{i\geq 0}(-1)^i\t^{1,p}_{r+i,1}\t^{1,p}_{s-1-i,1}\right)
\\
&
+f_{p+1}^{-2}\t^{1,p}_{r-1,1}\t^{1,p}_{s-1,1} -f_{p+1}^{-2}\t^{1,p}_{s-1,1}\t^{1,p}_{r-1,1}\\
=& f_{p+1}^{-1}\left(\sum_{j\geq 0}(-1)^{j+1}\t^{1,p}_{s-j-1,1}\t^{1,p}_{r+j,1}+
\sum_{i\geq 0}(-1)^i\t^{1,p}_{r+i,1}\t^{1,p}_{s-1-i,1}\right)\\
=& 0.
\end{align*}
If both  $r$ and  $s$ are odd and assuming $r>s$, on the right hand side of~(\ref{E:sgPoieven1}) the first,
sixth, seventh, eighth, and ninth terms vanish. Therefore, we have
\begin{align*}
&\{\t^{1,p+1}_{r,1},\t^{1,p+1}_{s,1}\}_f\\
=& f_{p+1}\left(\{\t^{1,p}_{r-1,1},\t^{1,p}_{s,1}\}_f+\{\t^{1,p}_{r,1},\t^{1,p}_{s-1,1}\}_f \right)+
\t^{1,p}_{r-1,1}\{f_{p+1},\t^{1,p}_{s,1}\}_f+\t^{1,p}_{s-1,1}\{\t^{1,p}_{r,1},f_{p+1}\}_f\\
=&
f_{p+1}\left(\sum_{i\geq 0}(-1)^i\t^{1,p}_{r-1+i,1}\t^{1,p}_{s-i,1}-\sum_{i\geq 1}(-1)^{i-1}\t^{1,p}_{s-1-i,1}\t^{1,p}_{r+i,1}\right) -f_{p+1}\t^{1,p}_{r-1,1}\t^{1,p}_{s,1}+f_{p+1}\t^{1,p}_{s-1,1}\t^{1,p}_{r,1}\\
=& f_{p+1}\left(\sum_{i\geq 0}(-1)^i\t^{1,p}_{r-1+i,1}\t^{1,p}_{s-i,1}-\sum_{j\geq 2}(-1)^j\t^{1,p}_{s-j,1}\t^{1,p}_{r+j-1,1}
-\t^{1,p}_{r-1,1}\t^{1,p}_{s,1} +\t^{1,p}_{s-1,1}\t^{1,p}_{r,1}\right)\\
=&f_{p+1}\left(\t^{1,p}_{r-1,1}\t^{1,p}_{s,1}-\t^{1,p}_{r,1}\t^{1,p}_{s-1,1}
-\t^{1,p}_{r-1,1}\t^{1,p}_{s,1} +\t^{1,p}_{s-1,1}f_{p+1}\t^{1,p}_{r,1}\right)\\
=&0.
\end{align*}
\item
$r$ is even, $s$ is odd and $r>s$. We have
\begin{align*}
&\{\t^{1,p+1}_{r,1},\t^{1,p+1}_{s,1}\}_f\\
=& \{\t^{1,p}_{r,1},\t^{1,p}_{s,1}\}_f+\{\t^{1,p}_{r-1,1},\t^{1,p}_{s-1,1}\}_f
+f_{p+1}^{-1}\t^{1,p}_{s-1,1}\{\t^{1,p}_{r-1,1},f_{p+1}\}_f+\t^{1,p}_{r-1,1}\{f_{p+1}^{-1},\t^{1,p}_{s,1}\}_f\\
=&\sum_{i\geq 0}(-1)^i\t^{1,p}_{r+i,1}\t^{1,p}_{s-i,1}-\left(\sum_{i\geq 1}(-1)^{i-1}\t^{1,p}_{s-i-1,1}\t^{1,p}_{r-1+i,1}\right)
+\t^{1,p}_{s-1,1}\t^{1,p}_{r-1,1}+f_{p+1}^{-1}\t^{1,p}_{r-1,1}\t^{1,p}_{s,1}\\
=&\sum_{i\geq 0}(-1)^i\t^{1,p}_{r+i,1}\t^{1,p}_{s-i,1}+\sum_{i\geq 0}(-1)^{i}\t^{1,p}_{s-i-1,1}\t^{1,p}_{r-1+i,1}
 +f_{p+1}^{-1}\t^{1,p}_{r-1,1}\t^{1,p}_{s,1}\\
 =&\sum_{i\geq 0}(-1)^i\Theta^{1,p+1}_{r+i,1}\Theta^{1,p+1}_{s-i,1},
\end{align*}
where in the last step we used (\ref{E:recur1}).
\item $r$ is even, $s$ is odd and $r<s$. We do similarly as in the previous case.
Therefore, with $\epsilon=1$ identity~(\ref{E:PoiTheta}) holds for $p+1$. Then, it holds for all $p\geq 0$.
\end{enumerate}
\end{proof}

\subsection{Proof of Lemma~\ref{L:ThetaPoieven2}}
\begin{proof}
The proof proceeds  by induction again.
It is easy to see that~(\ref{E:ThetaPoieveniden1})
can be  rewritten  as  follows
\begin{equation}
\label{EA:ThePoieven01}
\{\t^{1,p}_{r,0},\t^{1,p}_{s,1}\}_f=\left\{
\begin{array}{ll}
\sum_{i\geq0}\left(\Theta^{1,p}_{r-2i-1,0}\Theta^{1,p}_{s+2i+1,1}-\Theta^{1,p}_{r-2i-1,1}\Theta^{1,p}_{s+2i+1,0}\right)&
\ r\leq s,\\
\sum_{i\geq 0}\left(\Theta^{1,p}_{s-2i-1,0}\Theta^{1,p}_{r+2i+1,1}-\Theta^{1,p}_{s-2i-1,1}\Theta^{1,p}_{r+2i+1,0}\right)&
 \  r>s.
 \end{array}
 \right.
\end{equation}
 Identities~(\ref{E:ThetaPoieveniden1}) (or~(\ref{EA:ThePoieven01})) and~(\ref{E:ThetaPoieveniden2}) hold for $p=1,2$.
Suppose that they hold for  $p-1$ and $p$ ($p\geq 2$). We will prove that they hold for $p+1$.

Using  identity~(\ref{E:recur1}), expanding the left hand sides of~(\ref{E:ThetaPoieveniden1})
(or~(\ref{EA:ThePoieven01})) and~(\ref{E:ThetaPoieveniden2}),
we have
\begin{align}
\label{E:SgLinduc1}
& \{\Theta^{1,p+1}_{r,0},\Theta^{1,p+1}_{s,1}\}_f\notag\\
=& \{\Theta^{1,p}_{r,0}+f_{p+1}^{(-1)^r}\Theta^{1,p}_{r-1,0},\Theta^{1,p}_{s,1}+f_{p+1}^{(-1)^{s+1}}\Theta^{1,p}_{s-1,1}\}_f\notag\\
=& \{\Theta^{1,p}_{r,0},\Theta^{1,p}_{s,1}\}_f+f_{p+1}^{(-1)^r+(-1)^{s+1}}\{\Theta^{1,p}_{r-1,0},\Theta^{1,p}_{s-1,1}\}_f
+\t^{1,p}_{r,0}\t^{1,p}_{s,1}\{f_{p+1}^{(-1)^r},f_{p+1}^{(-1)^{s+1}}\}_f\notag\\
&+ f_{p+1}^{(-1)^{s+1}}\{\Theta^{1,p}_{r,0},\t^{1,p}_{s-1,1}\}_f+f_{p+1}^{(-1)^r}\{\t^{1,p}_{r-1,0},\t^{1,p}_{s,1}\}_f
+\t^{1,p}_{s-1,1}\{\t^{1,p}_{r,0},f_{p+1}^{(-1)^{s+1}}\}_f\notag\\
&+ \t^{1,p}_{r-1,0}\{f_{p+1}^{(-1)^r},\t^{1,p}_{s,1}\}_f+ f_{p+1}^{(-1)^r}\t^{1,p}_{s-1,1}\{\t^{1,p}_{r-1,0},f_{p+1}^{(-1)^{s+1}}\}_f
+f_{p+1}^{(-1)^{s+1}}\t^{1,p}_{r-1,0}\{(f_{p+1}^{(-1)^r},\t^{1,p}_{s-1,1}\}_f.
\end{align}
\begin{enumerate}
\item  $r\equiv s\pmod 2$.
We distinguish 2 cases.\\
Case 1:    $s-r=k\geq 0$,  we first prove the following
\begin{equation}
\label{E:sGsumprop1}
\sum_{i\geq 0}(-1)^{i+\epsilon}\t^{1,p}_{s-1-i,i+\epsilon}\t^{1,p}_{r+i,i+\epsilon+1}
=\sum_{i\geq 0}(-1)^{i+\epsilon}\t^{1,p}_{r-i-1,i+\epsilon}\t^{1,p}_{s+i,i+\epsilon+1}.
\end{equation}
The left hand side of this identity equals
\begin{align*}
&\sum_{i=k}^{s-1}(-1)^{i+\epsilon}\t^{1,p}_{s-1-i,i+\epsilon}\t^{1,p}_{r+i,i+\epsilon+1}
+\sum_{i=0}^{k-1}(-1)^{i+\epsilon}\t^{1,p}_{s-1-i,i+\epsilon}\t^{1,p}_{r+i,i+\epsilon+1}\\
=&\sum_{j=0}^{r-1}(-1)^{k+j+\epsilon}\t^{1,p}_{r-j-1,k+j+\epsilon}\t^{1,p}_{s+j,k+j+\epsilon+1}\\
&
+\sum_{i=0}^{\frac{k}{2}-1}\left((-1)^{i+\epsilon}\t^{1,p}_{s-1-i,i+\epsilon}\t^{1,p}_{r+i,i+\epsilon+1}
+(-1)^{k-1-i+\epsilon}\t^{1,p}_{s-1-(k-i-1),k-i-1+\epsilon}\t^{1,p}_{r+k-i-1,k-i+\epsilon}\right)\\
=&\sum_{i=0}^{r-1}(-1)^{j+\epsilon}\t^{1,p}_{r-j-1,j+\epsilon}\t^{1,p}_{s+j,j+\epsilon+1}
+\sum_{i=0}^{\frac{k}{2}-1}\left((-1)^{i+\epsilon}\t^{1,p}_{s-1-i,i+\epsilon}\t^{1,p}_{r+i,i+\epsilon+1}+
(-1)^{i+\epsilon+1}\t^{1,p}_{r+i,i+\epsilon+1}\t^{1,p}_{s-i-1,i+\epsilon}\right)\\
=&\sum_{i=0}^{r-1}(-1)^{i+\epsilon}\t^{1,p}_{r-i-1,i+\epsilon}\t^{1,p}_{s+i,i+\epsilon+1},
\end{align*}
which is the right hand side of~(\ref{E:sGsumprop1}).

Now using~(\ref{E:recur1}),  we expand the right hand side of the first identity of~(\ref{EA:ThePoieven01}).
We have
\begin{align}
\label{E:SgRinduc1}
&\sum_{i\geq 0}\left(\t^{1,p+1}_{r-2i-1,0}\t^{1,p+1}_{s+2i+1,1}-\t^{1,p+1}_{r-2i-1,1}\t^{1,p+1}_{s+2i+1,0}\right)
\notag\\
=&\sum_{i\geq 0}\left(\t^{1,p}_{r-2i-1,0}\t^{1,p}_{s+2i+1,1}-\t^{1,p}_{r-2i-1,1}\t^{1,p}_{s+2i+1,0}
+\t^{1,p}_{r-2-2i,0}\t^{1,p}_{s+2i,1}-\t^{1,p}_{r-2-2i,1}\t^{1,p}_{s+2i,0}\right)\notag\\
&+ f_{p+1}^{(-1)^{r-1}}\sum_{i\geq 0}
\left(\t^{1,p}_{r-2-2i,0}\t^{1,p}_{s+2i+1,1}-\t^{1,p}_{r-2i-1,1}\t^{1,p}_{s+2i,0}\right)\notag\\
& + f_{p+1}^{(-1)^s}\sum_{i\geq 0}\left(\t^{1,p}_{r-2i-1,0}\t^{1,p}_{s+2i,1}-\t^{1,p}_{r-2-2i,1}\t^{1,p}_{s+2i+1,0}\right)\notag \\
=&\sum_{i\geq 0}\left(\t^{1,p}_{r-2i-1,0}\t^{1,p}_{s+2i+1,1}-\t^{1,p}_{r-2i-1,1}\t^{1,p}_{s+2i+1,0}
+\t^{1,p}_{r-2-2i,0}\t^{1,p}_{s+2i,1}-\t^{1,p}_{r-2-2i,1}\t^{1,p}_{s+2i,0}\right)\notag \\
&+  f_{p+1}^{(-1)^{s+1}}\sum_{i\geq 0}(-1)^{i-1}\t^{1,p}_{r-1-i,i+1}\t^{1,p}_{s+i,i}
+ f_{p+1}^{(-1)^s}\sum_{i\geq 0}(-1)^i\t^{1,p}_{r-1-i,i}\t^{1,p}_{s+i,i+1}.
\end{align}
If $r$ and $s$ are both even. Using~(\ref{E:SgLinduc1}) and the induction assumption, we have
\begin{align*}
& \{\Theta^{1,p+1}_{r,0},\Theta^{1,p+1}_{s,1}\}_f\\
= &\sum_{i\geq 0}\left(\t^{1,p}_{r-2i-1,0}\t^{1,p}_{s+2i+1,1}-\t^{1,p}_{r-2i-1,1}\t^{1,p}_{s+2i+1,0}\right)
+\sum_{i\geq 0}\left(\t^{1,p}_{r-2-2i,0}\t^{1,p}_{s+2i,1}-\t^{1,p}_{r-2-2i,1}\t^{1,p}_{s+2i,0}\right)\\
&
+f_{p+1}^{(-1)^{r}}\sum_{i\geq 0}(-1)^i\t^{1,p}_{s+i,i+1}\t^{1,p}_{r-1-i,i}+
\t^{1,p}_{s-1,1}\t^{1,p}_{r-1,0}-\t^{1,p}_{r-1,0}\t^{1,p}_{s-1,1}\\
&+f_{p+1}^{(-1)^{s+1}}\sum_{i\geq 0}(-1)^{i-1}\t^{1,p}_{s-1-i,i+1}\t^{1,p}_{r+i,i},
\end{align*}
which equals~(\ref{E:SgRinduc1}) by using~(\ref{E:sGsumprop1}) with $\epsilon=1$.

If $r$ and $s$ are both odd, we have
\begin{align*}
& \{\Theta^{1,p+1}_{r,0},\Theta^{1,p+1}_{s,1}\}_f\\
= &\sum_{i\geq 0}\left(\t^{1,p}_{r-2i-1,0}\t^{1,p}_{s+2i+1,1}-\t^{1,p}_{r-2i-1,1}\t^{1,p}_{s+2i+1,0}\right)
+\sum_{i\geq 0}\left(\t^{1,p}_{r-2-2i,0}\t^{1,p}_{s+2i,1}-\t^{1,p}_{r-2-2i,1}\t^{1,p}_{s+2i,0}\right)\\
&+ f_{p+1}^{(-1)^{s+1}}\sum_{i\geq 0}(-1)^i\t^{1,p}_{s-1+i,i+1}\t^{1,p}_{r-i,i}+
f_{p+1}^{(-1)^{r}}\sum_{i\geq 0}(-1)^{i-1}\t^{1,p}_{s-i,i+1}\t^{1,p}_{r-1+i,i}
-f_{p+1}^{(-1)^{s+1}}\t^{1,p}_{s-1,1}\t^{1,p}_{r,0}
\\
&+f_{p+1}^{(-1)^{r}}\t^{1,p}_{r-1,0}\t^{1,p}_{s,1}\\
=&\sum_{i\geq 0}\left(\t^{1,p}_{r-2i-1,0}\t^{1,p}_{s+2i+1,1}-\t^{1,p}_{r-2i-1,1}\t^{1,p}_{s+2i+1,0}\right)
+\sum_{i\geq 0}\left(\t^{1,p}_{r-2-2i,0}\t^{1,p}_{s+2i,1}-\t^{1,p}_{r-2-2i,1}\t^{1,p}_{s+2i,0}\right)\\
&+f_{p+1}^{(-1)^{s+1}}\sum_{i\geq 0}(-1)^{i-1}\t^{1,p}_{s+i,i}\t^{1,p}_{r-1-i,i+1}
+f_{p+1}^{(-1)^{r}}\sum_{i\geq 0}(-1)^i\t^{1,p}_{s-1-i,i}\t^{1,p}_{r+i,i+1}.
\end{align*}
which equals~(\ref{E:SgRinduc1}) by using~(\ref{E:sGsumprop1}) with $\epsilon=0$.
Thus, the first identity of (\ref{E:ThetaPoieveniden1}) (or~(\ref{EA:ThePoieven01})) holds for $p+1.$

 Case 2: $s-r=-k<0$,  identity (\ref{E:ThetaPoieveniden1}) (or~(\ref{EA:ThePoieven01})) also holds  by using Remark~\ref{R:PoiTransform}.
\item $r\not\equiv s\pmod 2$.
Case 1: $s-r=k>0$
 With $r$ even, $s$ odd and $r<s$, we now expand the right hand side of the first identity of~(\ref{E:ThetaPoieveniden2})
with $p+1$.
Similar as~(\ref{E:sGsumprop1}), we have the following identities
\begin{align}
 \sum_{i\geq 0}(-1)^{i-1}\t^{1,p}_{s-i,i+1}\t^{1,p}_{r+i,i}&=\sum_{i\geq 1}(-1)^i\t^{1,p}_{s-1-i,i+1}\t^{1,p}_{r+i-1,i}
 \label{E:Sgsumprop2}\\
 \sum_{i\geq 0}(-1)^{i-1}\Theta^{1,p}_{s-i-1,i+1}\t^{1,p}_{r+i,i}+
\sum_{i\geq 0}(-1)^{i-1}\t^{1,p}_{s-i,i+1}\t^{1,p}_{r+i-1,i}
&=
\sum_{i\geq 1}(-1)^i\t^{1,p}_{r-1-i,i}\t^{1,p}_{s+i,i+1}\notag\\
&+
\sum_{i\geq 0}(-1)^{i-1}\t^{1,p}_{r-1-i,i+1}\t^{1,p}_{s+i,i}.
\label{E:Sgsumprop3}
\end{align}
We now expand the right hand side of~(\ref{E:ThetaPoieveniden2}) by using~(\ref{E:recur1}, we obtain
 \begin{align}
 \label{E:sGRinduc2}
 \sum_{i\geq 0}(-1)^{i-1}\t^{1,p+1}_{s-i,i+1}\t^{1,p+1}_{r+i,i}
 =& \sum_{i\geq 0}(-1)^{i-1}\t^{1,p}_{s-i,i+1}\t^{1,p}_{r+i,i} +
 f_{p+1}^2\sum_{i\geq 0}(-1)^{i-1}\t^{1,p}_{s-i-1,i+1}\t^{1,p}_{r+i-1,i}
 \notag\\
 & +
 f_{p+1}\sum_{i\geq 0}(-1)^{i-1}\Theta^{1,p}_{s-i-1,i+1}\t^{1,p}_{r+i,i}
 +f_{p+1}\sum_{i\geq 0}(-1)^{i-1}\t^{1,p}_{s-i,i+1}\t^{1,p}_{r+i-1,i}.
 \end{align}
For the left hand side of~(\ref{E:ThetaPoieveniden2}), using~(\ref{E:SgLinduc1}) and the induction assumption,
we have
 \begin{align*}
 &\{\t^{1,p+1}_{r,0},\t^{1,p+1}_{s,1}\}_f\\
 =& \sum_{i\geq 0}(-1)^{i-1}\t^{1,p}_{s-i,i+1}\t^{1,p}_{r+i,i}+
 f_{p+1}^2\sum_{i\geq 0}(-1)^i\t^{1,p}_{s-1+i,i+1}\t^{1,p}_{r-1-i,i}
 \\
 &+ f_{p+1}\sum_{i\geq 0}\left(\t^{1,p}_{r-2i-1,0}\t^{1,p}_{s-1+2i+1,1}-\t^{1,p}_{r-2i-1,1}\t^{1,p}_{s-1+2i+1,0}\right)
 \\
 &
 +f_{p+1}\sum_{i\geq 0}\left(\t^{1,p}_{r-1-2i-1,0}\t^{1,p}_{s+2i+1,1}-\t^{1,p}_{r-1-2i-1,1}\t^{1,p}_{s+2i+1,0}\right)
 -f_{p+1}\t^{1,p}_{r-1,0}\t^{1,p}_{s,1}-f_{p+1}^2\t^{1,p}_{s-1,1}\t^{1,p}_{r-1,0}\\
 =& \sum_{i\geq 0}(-1)^{i-1}\t^{1,p}_{s-i,i+1}\t^{1,p}_{r+i,i}
 +f_{p+1}^2\sum_{i\geq 1}(-1)^i\t^{1,p}_{s-1+i,i+1}\t^{1,p}_{r-1-i,i}
 + f_{p+1}\sum_{i\geq 1}(-1)^i\t^{1,p}_{r-1-i,i}\t^{1,p}_{s+i,i+1}
 \\
 &+
 f_{p+1}\sum_{i\geq 0}(-1)^{i-1}\t^{1,p}_{r-1-i,i+1}\t^{1,p}_{s+i,i}\\
 =& \sum_{i\geq 0}(-1)^{i-1}\t^{1,p+1}_{s-i,i+1}\t^{1,p+1}_{r+i,i}
 \end{align*}
 where in the final equality we used~(\ref{E:Sgsumprop2}) and~(\ref{E:Sgsumprop3}).
That implies that the second identity of (\ref{E:ThetaPoieveniden2}) holds for $p+1.$

With $r$ odd, $s$ even and $r<s$, we do similarly to what we did in  the previous case.

Case 2: $s=r=-k<0$, identity~(\ref{E:ThetaPoieveniden2}) still holds by using Remark~\ref{R:PoiTransform}.
\end{enumerate}
\end{proof}
\section{Properties of Psi  with respect to the Poisson brackets\label{AS:Psiprop}}
\subsection{Proof of Lemma~\ref{L:pkodd}}
\begin{proof}
We prove~(\ref{E:PsiPoi1}) and~(\ref{E:PsiPoi2}) simultaneously by induction.
We use the following property
\begin{equation}
\label{E:Psirecur}
\Psi^{a,b+1}_r=c_{b+2}\Psi^{a,b}_r+\Psi^{a,b-1}_{r-1}
\end{equation}
and therefore we get
\begin{equation}
\label{E:Psidrv}
\displaystyle{\frac{\partial \Psi^{a,b+1}_r}{\partial c_{b+2}}=\Psi^{a,b}_r}.
\end{equation}
We see that (1) and (2) hold for $p=1, 2, 3$.
Suppose (1) and (2) hold for $ p-2, p-1$ and $p$. We need to prove that (1) and (2) hold for $p+1$.
Expanding the right hand side of the first identity, we have
\begin{align*}
\{\Psi^{1,p+1}_r,\Psi^{1,p+1}_s\}_c&=\{c_{p+2}\Psi^{1,p}_r+\Psi^{1,p-1}_{r-1},c_{p+2}\Psi^{1,p}_s+\Psi^{1,p-1}_{s-1}\}_c\\
&=\{c_{p+2}\Psi^{1,p}_r,c_{p+2}\Psi^{1,p}_s\}_c+\{c_{p+2}\Psi^{1,p}_r,\Psi^{1,p-1}_{s-1}\}_c+
\{\Psi^{1,p-1}_{r-1},c_{p+2}\Psi^{1,p}_s\}_c+\{\Psi^{1,p-1}_{r-1},\Psi^{1,p-1}_{s-1}\}_c\\
&=c_{p+2}\Psi^{1,p}_s\{\Psi^{1,p}_r,c_{p+2}\}_c+c_{p+2}\Psi^{1,p}_r\{c_{p+2},\Psi^{1,p}_s\}_c
+\Psi^{1,p}_r\{c_{p+2},\Psi^{1,p-1}_s\}_c\\
&\quad +\Psi^{1,p}_s\{\Psi^{1,p-1}_r,c_{p+2}\}_c+ c_{p+2}\left(\{\Psi^{1,p}_r,\Psi^{1,p-1}_{s-1}\}_c
+\{\Psi^{1,p-1}_{r-1},\Psi^{1,p}_s\}_c\right)\\
&=c_{p+2}\left(\Psi^{1,p}_s\Psi^{1,p-1}_r-\Psi^{1,p}_r\Psi^{1,p-1}_s + \{\Psi^{1,p}_r,\Psi^{1,p-1}_{s-1}\}_c
+\{\Psi^{1,p-1}_{r-1},\Psi^{1,p}_s\}_c\right).
\end{align*}
 For the second identity, we also have
\begin{align*}
\{\Psi^{1,p+1}_r,\Psi^{1,p}_s\}_c+\{\Psi^{1,p}_r,\Psi^{1,p+1}_s\}_c&=\{c_{p+2}\Psi^{1,p}_r+\Psi^{1,p-1}_{r-1},\Psi^{1,p}_s\}_c+
\{\Psi^{1,p}_r,c_{p+2}\Psi^{1,p}_s+\Psi^{1,p-1}_{s-1}\}_c\\
&=\{\Psi^{1,p-1}_{r-1},\Psi^{1,p}_s\}_c+\{\Psi^{1,p}_r,\Psi^{1,p-1}_{s-1}\}_c
-\Psi^{1,p}_r\Psi^{1,p-1}_{s-1}+\Psi^{1,p}_s\Psi^{1,p-1}_{r-1}
\end{align*}
Now to prove (1) and (2) hold for $p+1$, we only need to prove that
 $$T:=\{\Psi^{1,p-1}_{r-1},\Psi^{1,p}_s\}_c+\{\Psi^{1,p}_r,\Psi^{1,p-1}_{s-1}\}_c
-\Psi^{1,p}_r\Psi^{1,p-1}_{s-1}+\Psi^{1,p}_s\Psi^{1,p-1}_{r-1}=0.$$
Using~(\ref{E:Psirecur}) and the induction assumption to expand $T$, we obtain
\begin{align*}
T&=(c_{p+1}\Psi^{1,p-1}_s+\Psi^{1,p-2}_{s-1})\Psi^{1,p-1}_r
-c_{p+2}(c_{p+1}\Psi^{1,p-1}_r+\Psi^{1,p-2}_{r-1})\Psi^{1,p-1}_s\\
& \quad + \{c_{p+1}\Psi^{1,p-1}_r+\Psi^{1,p-2}_{r-1},\Psi^{1,p-1}_{s-1}\}_c+
\{\Psi^{1,p-1}_{r-1},c_{p+1}\Psi^{1,p-1}_s+\Psi^{1,p-2}_{s-1}\}_c\\
&=\Psi^{1,p-2}_{s-1}\Psi^{1,p-1}_r-\Psi^{1,p-2}_{r-1}\Psi^{1,p-1}_{s}+
\left(\{\Psi^{1,p-2}_{r-1},\Psi^{1,p-1}_{s-1}\}_c+\{\Psi^{1,p-1}_{r-1},\Psi^{1,p-2}_{s-1}\}_c\right)\\
&\quad +\Psi^{1,p-1}_r\{c_{p+1},\Psi^{1,p-1}_{s-1}\}+\Psi^{1,p-1}_s\{\Psi^{1,p-1}_{r-1},c_{p+1}\}_c\\
&=\Psi^{1,p-2}_{s-1}\Psi^{1,p-1}_r-c_{p+2}\Psi^{1,p-2}_{r-1}\Psi^{1,p-1}_{s}+
\left(\{\Psi^{1,p-2}_{r-1},\Psi^{1,p-1}_{s-1}\}_c+\{\Psi^{1,p-1}_{r-1},\Psi^{1,p-2}_{s-1}\}_c\right)\\
&\quad -\Psi^{1,p-1}_r\Psi^{1,p-2}_{s-1}+\Psi^{1,p-1}_s\Psi^{1,p-2}_{r-1}\\
&=\left(\{\Psi^{1,p-2}_{r-1},\Psi^{1,p-1}_{s-1}\}+\{\Psi^{1,p-1}_{r-1},\Psi^{1,p-2}_{s-1}\}\right)\\
&=0.
\end{align*}
That means (1) and (2) hold for $p+1$.
\end{proof}
\subsection{Corollary of lemma~\ref{L:pkodd}}
We obtain the following formulas for (sums of) Poisson brackets of $\Psi$s.
\begin{corollary}
\label{C:pkodd}
\begin{enumerate}
Let $p\geq 1$ and let $r,s\in\Z$. Then,
\item $\{\Psi^{1,p}_r,\Psi^{1,p-1}_{s-1}\}_c+\{\Psi^{1,p-1}_{r-1},\Psi^{1,p}_s\}_c
=\Psi^{1,p}_r\Psi^{1,p-1}_s-\Psi^{1,p}_s\Psi^{1,p-1}_r,$
\item $\{\Psi^{a,b}_r,\Psi^{a,b}_s\}_c=0$ with $0\leq r,s\leq \lfloor(b-a)/2\rfloor+1$,
\item $\{\Psi^{1,p}_r,\Psi^{2,p}_{s-1}\}_c+\{\Psi^{2,p}_{r-1},\Psi^{1,p}_s\}_c
=\Psi^{1,n}_s\Psi^{2,p}_r-\Psi^{1,p}_r\Psi^{2,p}_s$,
\item $\{\Psi^{1,p}_r,\Psi^{2,p}_s\}_c+\{\Psi^{2,p}_r,\Psi^{1,p}_s\}_c=0$,
\item $\{\Psi^{1,p}_r,\Psi^{2,p+1}_s\}_c+\{\Psi^{2,p+1}_r,\Psi^{1,p}_s\}_c=0$,
\item $\{\Psi^{1,p}_r,\Psi^{2,p-1}_{s-1}\}_c+\{\Psi^{2,p-1}_{r-1},\Psi^{1,p}_s\}_c
=\Psi^{2,p}_r\Psi^{1,p-1}_s-\Psi^{2,p}_s\Psi^{1,p-1}_r$,
\item $\{\Psi^{1,p}_r,\Psi^{1,p-1}_{s-1}\}_c+\{\Psi^{1,p-1}_{r-1},\Psi^{1,p}_s\}_c=\Psi^{1,p}_r\Psi^{1,p-1}_s-\Psi^{1,p}_s\Psi^{1,p-1}_r$,
\item $\{\Psi^{1,p+1}_r,\Psi^{2,p}_s\}_c+\{\Psi^{2,p}_r,\Psi^{1,p+1}_s\}_c=0$,
\item $\{\Psi^{1,p}_r,\Psi^{2,p-1}_{s-2}\}_c+\{\Psi^{2,p-1}_{r-2},\Psi^{1,p}_s\}_c=
\Psi^{2,p}_r\Psi^{1,p-1}_{s-1}-\Psi^{2,p}_s\Psi^{1,p-1}_{r-1}+\Psi^{2,p}_{r-1}\Psi^{1,p-1}_s-\Psi^{2,p}_{s-1}\Psi^{1,p-1}_r$.
\end{enumerate}
\end{corollary}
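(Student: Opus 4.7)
The proof plan uses Lemma \ref{L:pkodd} together with the machinery that was already deployed inside its proof and inside the proof of Theorem \ref{T:pkodd}: the recurrences
\[
\Psi^{a,b+1}_r = c_{b+2}\Psi^{a,b}_r + \Psi^{a,b-1}_{r-1}, \qquad \Psi^{a,b}_r = c_a\Psi^{a+1,b}_r + \Psi^{a+2,b}_{r-1},
\]
the associated derivative formulas $\partial_{c_{b+1}}\Psi^{a,b}_r = \Psi^{a,b-1}_r$ and $\partial_{c_a}\Psi^{a,b}_r = \Psi^{a+1,b}_r$, and the elementary identity $\{c_j,g\}_c = \partial_{c_{j+1}}g - \partial_{c_{j-1}}g$ stemming from the fact that $\Omega^{\pk}$ is a constant tridiagonal matrix with $\pm 1$ off-diagonals. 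Because $\Omega^{\pk}$ is constant, the bracket $\{,\}_c$ is translation invariant in the $c_i$-indices, so Lemma \ref{L:pkodd} holds verbatim on any index range $[a,b]$ in place of $[1,p]$.

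Items (1) and (7) coincide, and the identity they assert is precisely the vanishing of the quantity $T$ that is established inside the induction step of the proof of Lemma \ref{L:pkodd}; nothing further is needed for them. Item (2) is \eqref{E:PsiPoi1} with indices translated by $a-1$. The three vanishing mixed-range statements (4), (5), (8) are treated uniformly: in each, substitute $\Psi^{1,q}_\bullet = c_1\Psi^{2,q}_\bullet + \Psi^{3,q}_{\bullet-1}$ in both summands and expand the bracket. The pure $\{\Psi^{2,\cdot},\Psi^{2,\cdot'}\}_c$ terms then vanish by translated instances of \eqref{E:PsiPoi1} or \eqref{E:PsiPoi2}, while the boundary contributions of the form $\Psi^{\bullet}\{c_1,\Psi^{\bullet}\}_c$ arise in antisymmetric pairs that cancel against each other.

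The non-vanishing items (3), (6), (9) are proved by exactly the same expansion strategy, applied however to the asymmetric sums $\{\Psi_r,\Psi_{s-1}\}_c + \{\Psi_{r-1},\Psi_s\}_c$; now the boundary contributions survive and supply the quadratic right-hand sides. For item (3), expand $\Psi^{1,p}_\bullet = c_1\Psi^{2,p}_\bullet + \Psi^{3,p}_{\bullet-1}$ and use item (4) together with $\{c_1,\Psi^{2,p}_\bullet\}_c = \Psi^{3,p}_\bullet$ to assemble the claimed combination $\Psi^{1,p}_s\Psi^{2,p}_r - \Psi^{1,p}_r\Psi^{2,p}_s$. Item (6) is the dual statement, proved by expanding at the upper index via $\Psi^{a,b+1}_r = c_{b+2}\Psi^{a,b}_r + \Psi^{a,b-1}_{r-1}$ and invoking item (2) and \eqref{E:PsiPoi2}. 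Item (9), whose right-hand side contains four products, is the most elaborate: both the lower and the upper end of a $\Psi^{2,p-1}$ must be expanded, and the four resulting boundary terms telescope into the stated combination after successively applying items (1)/(7), (3), (6), and (2).

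The main obstacle I anticipate is purely combinatorial bookkeeping in items (3), (6) and especially (9): several $\Psi\cdot\partial_{c_\bullet}\Psi$ boundary contributions have to be recognized as the correct $\Psi^{a',b'}_{s'}$ products and then matched, with the right signs, to the specific quadratic combinations on the right-hand sides. No new conceptual ingredient beyond the three tools above is required, and the algebra parallels the $T=0$ computation already carried out inside the proof of Lemma \ref{L:pkodd}.
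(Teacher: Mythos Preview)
Your handling of items (1), (2), and (7) matches the paper and is fine. The difficulty is in your treatment of items (3)--(6), (8), (9), where the expansion you describe does not produce what you claim.

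Take item (4) as the test case. Substituting $\Psi^{1,p}_\bullet=c_1\Psi^{2,p}_\bullet+\Psi^{3,p}_{\bullet-1}$ in both summands gives
\[
\{\Psi^{1,p}_r,\Psi^{2,p}_s\}_c+\{\Psi^{2,p}_r,\Psi^{1,p}_s\}_c
=\Psi^{2,p}_r\Psi^{3,p}_s-\Psi^{2,p}_s\Psi^{3,p}_r
+\bigl(\{\Psi^{2,p}_r,\Psi^{3,p}_{s-1}\}_c+\{\Psi^{3,p}_{r-1},\Psi^{2,p}_s\}_c\bigr),
\]
after the $c_1\{\Psi^{2,p}_r,\Psi^{2,p}_s\}_c$ terms vanish. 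The surviving boundary products do \emph{not} cancel antisymmetrically, and the residual bracket involves the ranges $[2,p]$ and $[3,p]$, which differ at the \emph{lower} end; neither \eqref{E:PsiPoi1} nor \eqref{E:PsiPoi2} says anything about such pairs (they only control ranges coinciding at the lower end). What you actually obtain is a translated instance of item (3), so your scheme ``(4) $\Rightarrow$ (3)'' becomes circular. The same problem occurs for (5) and (8): the residual brackets after your expansion mix $[2,p+1]$ with $[3,p]$, etc., and are not covered by the lemma.

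The paper avoids this by expanding in the opposite direction, toward a \emph{lower} auxiliary index. For (3) it starts from $\{\Psi^{0,p}_r,\Psi^{0,p}_s\}_c=0$ (a translated instance of item (2)), writes $\Psi^{0,p}_\bullet=c_0\Psi^{1,p}_\bullet+\Psi^{2,p}_{\bullet-1}$, and reads off (3) as the coefficient of the free variable $c_0$; here the only brackets that appear are $\{\Psi^{1,p}_\cdot,\Psi^{1,p}_\cdot\}_c$ and $\{\Psi^{2,p}_\cdot,\Psi^{2,p}_\cdot\}_c$, both genuinely covered by item (2). Item (4) is then obtained from (3) via $\Psi^{2,p}_s=\Psi^{0,p}_{s+1}-c_0\Psi^{1,p}_{s+1}$. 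Items (5), (6), (8), (9) are handled similarly, with (5) and (8) requiring a short induction on $p$. Your toolbox is the right one, but the direction of expansion matters: push the lower index \emph{down} to $0$, not up to $2,3$, so that the residual brackets land inside the lemma rather than outside it.
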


\begin{proof}
\begin{enumerate}
\item It follows from the proof of Lemma~\ref{L:pkodd}.
\item It  follows from Lemma~\ref{L:pkodd}.
\item Using (2) we have $\{\Psi^{0,p}_r,\Psi^{0,p}_s\}_c=0$.
On the other hand, we have
\begin{align*}
\{\Psi^{0,p}_r,\Psi^{0,p}_s\}_c&=\{c_0\Psi^{1,p}_r+\Psi^{2,p}_{r-1},c_0\Psi^{1,p}_s+\Psi^{2,p}_{s-1}\}_c\\
&=\{c_0\Psi^{1,p}_r,c_0\Psi^{1,p}_s\}_c+\{c_0\Psi^{1,p}_s,\Psi^{2,p}_{s-1}\}_c+\{\Psi^{2,p}_{r-1},c_0\Psi^{1,p}_s\}_c
+\{\Psi^{2,p}_{r-1},\Psi^{2,p}_{s-1}\}_c\\
&=c_0\Psi^{1,p}_s\{\Psi^{1,p}_r,c_0\}+c_0\Psi^{1,p}_r\{c_0,\Psi^{1,p}_s\}_c+
c_0\left(\{\Psi^{1,p}_r,\Psi^{2,p}_{s-1}\}_c+\{\Psi^{2,p}_{r-1},\Psi^{1,p}_s\}_c\right)\\
&\quad + \Psi^{1,p}_s\{\Psi^{2,p}_{r-1},c_0\}_c+\Psi^{1,p}_r\{c_0,\Psi^{2,p}_{s-2}\}_c\\
&=-c_0\Psi^{1,p}_s\Psi^{2,p}_r+c_0\Psi^{1,p}_r\Psi^{2,p}_s+
c_0\left(\{\Psi^{1,p}_r,\Psi^{2,p}_{s-1}\}+\{\Psi^{2,p}_{r-1},\Psi^{1,p}_s\}\right)
\end{align*}
Therefore, we get
$\{\Psi^{1,p}_r,\Psi^{2,p}_{s-1}\}_c+\{\Psi^{2,p}_{r-1},\Psi^{1,p}_s\}_c
=\Psi^{1,p}_s\Psi^{2,p}_r-\Psi^{1,p}_r\Psi^{2,p}_s$.
\item For this identity, we expand the left hand side. We have
\begin{align*}
\rm{LHS}&=\{\Psi^{1,p}_r,\Psi^{0,p}_{s+1}-c_0\Psi^{1,p}_{s+1}\}_c
+\{\Psi^{0,p}_{r+1}-c_0\Psi^{1,p}_{r+1},\Psi^{1,p}_s\}_c\\
&=\{\Psi^{1,p}_r,\Psi^{0,p}_{s+1}\}_c+\{\Psi^{0,p}_{r+1},\Psi^{1,p}_s\}_c-
\Psi^{1,p}_{s+1}\{\Psi^{1,p}_r,c_0\}_c-\Psi^{1,p}_{r+1}\{c_0,\Psi^{1,p}_s\}_c\\
&=\{\Psi^{1,p}_r,\Psi^{0,p}_{s+1}\}_c+\{\Psi^{0,p}_{r+1},\Psi^{1,p}_s\}_c
+\Psi^{1,p}_{s+1}\Psi^{2,p}_r-\Psi^{1,p}_{r+1}\Psi^{2,p}_s.
\end{align*}
By property (3) we have
\begin{align*}
\{\Psi^{1,p}_r,\Psi^{0,p}_{s+1}\}_c+\{\Psi^{0,p}_{r+1},\Psi^{1,p}_s\}_c&=
\Psi^{0,p}_{s+1}\Psi^{1,p}_{r+1}-\Psi^{0,p}_{r+1}\Psi^{1,p}_{s+1}\\
&=(c_0\Psi^{1,p}_{r+1}+\Psi^{2,p}_r)\Psi^{1,p}_{r+1}-(c_0\Psi^{1,p}_{r+1}+\Psi^{2,n}_r)\Psi^{1,p}_{s+1}\\
&=\Psi^{2,p}_s\Psi^{1,p}_{r+1}-\Psi^{2,p}_r\Psi^{1,p}_{s+1}.
\end{align*}
It means that $\{\Psi^{1,p}_r,\Psi^{2,p}_s\}_c+\{\Psi^{2,p}_r,\Psi^{1,p}_s\}_c=0$.
\item  One can verify that this identity holds for $p=1,2,3$. Expanding  the left hand side using~(\ref{E:Psirecur}), we have
 \begin{align*}
 LHS
 &=
 \{\Psi^{1,p}_r,c_{p+2}\Psi^{2,p}_s+\Psi^{2,p-1}_{s-1}\}_c+\{c_{p+2}\Psi^{2,p}_r+\Psi^{2,p-1}_{r-1},\Psi^{1,p}_s\}_c\\
 &=c_{p+2}(\{\Psi^{1,p}_r,\Psi^{2,p}_s\}_c+\{\Psi^{2,p}_r,\Psi^{1,p}_s\}_c)+
 \Psi^{2,p}_s\{\Psi^{1,p}_r,c_{p+2}\}_c+\Psi^{2,p}_r\{c_{p+2},\Psi^{1,p}_s\}_c\\
&\quad + \{\Psi^{1,p}_r,\Psi^{2,p-1}_{s-1}\}_c+\{\Psi^{2,p-1}_{r-1},\Psi^{1,p}_s\}_c\\
&=\Psi^{2,p}_s\Psi^{1,p-1}_r-\Psi^{2,p}_r\Psi^{1,p-1}_s+
\{c_{p+1}\Psi^{1,p-1}_r+\Psi^{1,p-2}_{r-1},\Psi^{2,p-1}_{s-1}\}_c
\\
&\quad
+\{\Psi^{2,p-1}_{r-1},c_{p+1}\Psi^{1,p-1}_s+\Psi^{1,p-2}_{s-1}\}_c\\
&=\Psi^{2,p}_s\Psi^{1,p-1}_r-\Psi^{2,p}_r\Psi^{1,p-1}_s+
c_{p+1}(\{\Psi^{1,p-1}_r,\Psi^{2,p-1}_{s-1}\}_c+\{\Psi^{2,p-1}_{r-1},\Psi^{1,p-1}_s\}_c)\\
&\quad + \Psi^{1,p-1}_r\{c_{p+1},\Psi^{2,p-1}_{s-1}\}_c+\Psi^{1,p-1}_s\{\Psi^{2,p-1}_{r-1},c_{p+1}\}_c
+\{\Psi^{1,p-2}_{r-1},\Psi^{2,p-1}_{s-1}\}_c+\{\Psi^{2,p-1}_{r-1},\Psi^{1,p-2}_{s-1}\}_c\\
&=\Psi^{2,p}_s\Psi^{1,p-1}_r-\Psi^{2,p}_r\Psi^{1,p-1}_s+c_{p+1}(\Psi^{1,p-1}_s\psi^{2,p-1}_r-\Psi^{1,p-1}_r\Psi^{2,p-1}_s)\\
&\quad-\Psi^{1,p-1}_r\Psi^{2,p-2}_{s-1}+\Psi^{1,p-1}_s\Psi^{2,p-2}_{r-1}
+\{\Psi^{1,p-2}_{r-1},\Psi^{2,p-1}_{s-1}\}_c+\{\Psi^{2,p-1}_{r-1},\Psi^{1,p-2}_{s-1}\}_c\\
&=\{\Psi^{1,p-2}_{r-1},\Psi^{2,p-1}_{s-1}\}_c+\{\Psi^{2,p-1}_{r-1},\Psi^{1,p-2}_{s-1}\}_c.
 \end{align*}
 Therefore, using induction we prove our statement.
 \item This identity follows from the proof of the identity in (5).
 \item We have
 \begin{align*}
\{\Psi^{1,p}_r,\Psi^{1,p-1}_{s-1}\}_c+\{\Psi^{1,p-1}_{r-1},\Psi^{1,p}_s\}_c&=
\{\Psi^{1,p}_r,\Psi^{1,p+1}_{s}-c_{p+2}\Psi^{1,p}_s\}_c
\\&
\quad +\{\Psi^{1,p+1}_r-c_{p+2}\Psi^{1,p}_r,\Psi^{1,p}_s\}_c\\
&=-\Psi^{1,p}_s\{\Psi^{1,p}_r,c_{p+2}\}_c-\Psi^{1,p}_r\{c_{p+2},\Psi^{1,p}_s\}_c\\
&=-\Psi^{1,p}_s\Psi^{1,p-1}_r+\Psi^{1,p}_r\Psi^{1,p-1}_s
 \end{align*}
 \item We prove this by induction. This identity holds for $p=1,2$. Suppose that this identity holds for
  $p-2$, and $p-1$ we need to prove that it holds for $p$.
 We have
 \begin{align*}
  &\quad \{\Psi^{1,p+1}_r,\Psi^{2,p}_s\}_c+\{\Psi^{2,p}_r,\Psi^{1,p+1}_s\}_c\\
  &=\{c_{p+2}\Psi^{1,p}_r+\Psi^{1,p-1}_{r-1},\Psi^{2,p}_s\}_c+\{\Psi^{2,p}_r,c_{p+2}\Psi^{1,p}_s+\Psi^{1,p-1}_{s-1}\}_c\\
  &=\{\Psi^{1,p-1}_{r-1},\Psi^{2,p}_s\}_c+\{\Psi^{2,p}_r,\Psi^{1,p-1}_{s-1}\}_c-\Psi^{1,p}_r\Psi^{2,p-1}_s+\Psi^{1,p}_s\Psi^{2,p-1}_r\\
  &=\{\Psi^{1,p-1}_{r-1},c_{p+1}\Psi^{2,p-1}_s+\Psi^{2,p-2}_{s-1}\}_c+\{c_{p+1}\Psi^{2,p-1}_r+\Psi^{2,p-2}_{r-1},\Psi^{1,p-1}_{s-1}\}_c
  \\
  &\quad
  -\Psi^{1,p}_r\Psi^{2,p-1}_s+\Psi^{1,p}_s\Psi^{2,p-1}_r\\
  &=\{\Psi^{1,p-1}_{r-1},\Psi^{2,p-2}_{s-2}\}_c+\{\Psi^{2,p-2}_{r-2},\Psi^{1,p-1}_{s-1}\}_c
  +c_{p+1}(\{\Psi^{1,p-1}_{r-1},\Psi^{2,p-1}_s\}_c+\{\Psi^{2,p-1}_r,\Psi^{1,p-1}_{s-1}\}_c)\\
  &\quad -\Psi^{2,p-1}_s\Psi^{1,p-2}_{r-1}+\Psi^{2,p-1}_r\Psi^{1,p-2}_{s-1}-\Psi^{1,p}_r\Psi^{2,p-1}_s+\Psi^{1,p}_s\Psi^{2,p-1}_r\\
  &=c_{p+1}(\{\Psi^{1,p-1}_{r-1},\Psi^{2,p-1}_s\}_c+\{\Psi^{2,p-1}_r,\Psi^{1,p-1}_{s-1}\}_c)
  +\Psi^{2,p-1}_s\Psi^{1,p-2}_{r-1}-\Psi^{2,p-1}_r\Psi^{1,p-2}_{s-1}\\
  &\quad  -\Psi^{1,p}_r\Psi^{2,p-1}_s+\Psi^{1,p}_s\Psi^{2,p-1}_r.
\end{align*}
Since $\{\Psi^{1,p-1}_{r-1},\Psi^{2,p-1}_{s}\}_c+\{\Psi^{2,p-1}_{r-1},\Psi^{1,p-1}_s\}_c=0$ (by (4)), we get
$\{\Psi^{1,p-1}_{r-1},\Psi^{2,p-1}_s\}_c=\{\Psi^{1,p-1}_s,\Psi^{2,p-1}_{r-1}\}_c.$
Similarly, we obtain $\{\Psi^{2,p-1}_r,\Psi^{1,p-1}_{s-1}\}_c=\{\Psi^{2,p-1}_{s-1},\Psi^{1,p-1}_r\}_c$.
Therefore, we have
\begin{align*}
\{\Psi^{1,p-1}_{r-1},\Psi^{2,p-1}_s\}_c+\{\Psi^{2,p-1}_r,\Psi^{1,p-1}_{s-1}\}_c
&=\{\Psi^{1,p-1}_s,\Psi^{2,p-1}_{r-1}\}_c+\{\Psi^{2,p-1}_{s-1},\Psi^{1,p-1}_r\}_c\\
&=\Psi^{1,p-1}_r\Psi^{2,p-1}_s-\Psi^{1,p-1}_s\Psi^{2,p-1}_r.
\end{align*}
Thus, we get $\{\Psi^{1,p+1}_r,\Psi^{2,p}_s\}_c+\{\Psi^{2,p}_r,\Psi^{1,p+1}_s\}_c=0$.
\item We have
\begin{align*}
LHS
&=\{\Psi^{1,p}_r,\Psi^{2,p+1}_{s-1}-c_{p+2}\Psi^{2,p}_{s-1}\}_c+\{\Psi^{2,p+1}_{r-1}-c_{p+2}\Psi^{2,p}_{r-1},\Psi^{1,p}_s\}_c\\
&=\{\Psi^{1,p}_r,\Psi^{2,p+1}_{s-1}\}_c+\{\Psi^{2,p+1}_{r-1},\Psi^{1,p}_s\}_c
-\Psi^{2,p}_{s-1}\Psi^{1,p-1}_r+\Psi^{2,p}_{r-1}\Psi^{1,p-1}_{s-1} -c_{p+2}(\{\Psi^{1,p}_r,\Psi^{2,p}_{s-1}\}_c+\{\Psi^{2,p}_{r-1},\Psi^{1,p}_s\}_c)\\
&=\{\Psi^{1,p}_r,\Psi^{2,p+1}_{s-1}\}_c+\{\Psi^{2,p+1}_{r-1},\Psi^{1,p}_s\}_c
-\Psi^{2,p}_{s-1}\Psi^{1,p-1}_r+\Psi^{2,p}_{r-1}\Psi^{1,p-1}_{s-1}
- c_{p+2}(\Psi^{1,p}_s\Psi^{2,p}_r-\Psi^{1,p}_r\Psi^{2,p}_s).
\end{align*}
Now we have
\begin{align*}
&\quad \  \{\Psi^{1,p}_r,\Psi^{2,p+1}_{s-1}\}_c+\{\Psi^{2,p+1}_{r-1},\Psi^{1,p}_s\}_c
\\
&=\{\Psi^{1,p}_r,\Psi^{0,p+1}_s-c_0\Psi^{1,p+1}_s\}_c+\{\Psi^{0,p+1}_r-c_0\Psi^{1,p+1}_r,\Psi^{1,p}_s\}_c\\
&=\{\Psi^{1,p}_r,\Psi^{0,p+1}_s\}_c+\{\Psi^{0,p+1}_r,\Psi^{1,p}_s\}_c-c_0(\{\Psi^{1,p}_r,\Psi^{1,p+1}_s\}_c+\Psi^{1,p+1}_r,\Psi^{1,p}_s)
-\Psi^{1,p+1}_s\{\Psi^{1,p}_r,c_0\}_c\\
&\quad -\Psi^{1,p+1}_r\{c_0,\Psi^{1,p}_s\}_c\\
&=\Psi^{1,p+1}_s\Psi^{2,p}_r-\Psi^{1,p+1}_r\Psi^{2,p}_s.
\end{align*}
Therefore, we get
\begin{align*}
LHS&=\Psi^{1,p+1}_s\Psi^{2,p}_r-\Psi^{1,p+1}_r\Psi^{2,p}_s- c_{p+2}(\Psi^{1,p}_s\Psi^{2,p}_r-\Psi^{1,p}_r\Psi^{2,p}_s)
 -\Psi^{2,p}_{s-1}\Psi^{1,p-1}_r+\Psi^{2,p}_{r-1}\Psi^{1,p-1}_{s-1}\\
&=\Psi^{2,p}_r\Psi^{1,p-1}_{s-1}-\Psi^{2,p}_s\Psi^{1,p-1}_{r-1}+\Psi^{2,p}_{r-1}\Psi^{1,p-1}_s-\Psi^{2,p}_{s-1}\Psi^{1,p-1}_r
=RHS.
\end{align*}
\end{enumerate}
\end{proof}


\bibliographystyle{amsplain}

\end{document}